\newtheorem{Definition}{Definition}
\newtheorem{Proposition}{Proposition}
\newtheorem{Lemma}{Lemma}
\newtheorem{Corollary}{Corollary}
\newtheorem{Remark}{Remark}
\def\Tau {\mathcal{T}}
\title[Aubry-Mather theory] {Percival Lagrangian approach
to  Aubry-Mather theory}
\author[X. Su]{Xifeng Su}
\address{Dept. of Mathematics, Nanjing University, Nanjing, 210093, CHINA}
\address{Dept. of Mathematics, Univ. of Texas at Austin,
1 University Station C1200, Austin TX 78712-0257}
\email{billy3492@gmail.com, xifengsu@math.utexas.edu}
\author[R. de la Llave]{Rafael de la Llave}
\address{Dept. of Mathematics, Univ. of Texas at Austin,
1 University Station C1200, Austin TX 78712-0257}
\email{llave@math.utexas.edu}
\begin{document}
\maketitle

\begin{abstract}
We present some streamlined proofs of some of the basic results in
Aubry-Mather theory (existence of quasi-periodic minimizers,
multiplicity results when there are gaps among minimizers)
 based on the study of hull functions.
We present results in arbitrary number of dimensions

We also compare the proofs and results with those obtained in other
formalisms.

\end{abstract}

\keywords{ Quasi-periodic solutions, variational methods,
Aubry-Mather theory, hull functions}

\subjclass[2000]{
70H12, %    Periodic and almost periodic solutions
82B20, %        Lattice systems (Ising, dimer, Potts, etc.) and systems on graphs
37A60, %    Dynamical systems in statistical mechanics [See also 82Cxx]
49J40, %    Variational methods including variational inequalities
39A14 %     Partial difference equations
}

\section{Introduction}
Many problems in dynamics and in solid state physics lead to the
study of minimizers and other critical points of (formal)
variational problems. One wants to establish existence and geometric
properties of these minimizers and critical points.

For example, orbits of a twist map are critical points of the action
(see \cite{Gole'01}). In other physical problems (e.g. motion of
dislocations, spin waves, etc.) the interpretation of the
variational principle is energy and the critical points are
equilibrium states (see \cite{Aubry'83,Monneau'08}), whereas
minimizers are ground states.

The theory of critical points for such functionals  was studied by
mathematicians  very intensely since the early 80's due to the
systematic work of Aubry (\cite{Aubry'83}) and Mather
(\cite{Mather'82}), (but there are precedents in the mathematical
work of Morse and Hedlund in the 30's
\cite{Morse'24,Morse'73,MR1503086} and much more work by physicists
\cite{MR2035039}).

{From} the point of view of analysis,  one of the problems of the
theory is that the variational problems are formal and that
therefore, one cannot use a straightforward approach to the calculus
of variations. Also, to look for quasi-periodic solutions, one has
to deal with functionals in
$\ell^\infty=\{~\{u_i\}_{i\in\mathbb{Z}^d}~|~\|u\|_{\ell^\infty}\equiv\sup_{i\in\mathbb{Z}^d}|u_i|<\infty\}$
which is a notoriously ill-behaved space.

For example, we will be dealing with the variational problem for
\emph{``configurations''} i.e $u:\mathbb{Z}^d \rightarrow \mathbb{R}
$
 \begin{equation}\label{lagrangian}
    \mathscr{L}(u)=\sum_{i\in\mathbb{Z}^d}\sum_{j=1}^d
    H_{j}(u_{i},u_{i+e_j})
\end{equation}

The case $d = 1$ corresponds to twist mappings. When we are looking
for quasi-periodic solutions, the sums in \eqref{lagrangian} are
clearly, not meant to converge but there are ways of associating
well defined variational problems to the formal functionals
\eqref{lagrangian}.

There are many standard ways of dealing with such problems. The two
main ones are: A) To  work in spaces of sequences defining precisely
what one means by  minimizers, critical values of the action, etc.
This is what was done in the classical calculus of variations
starting with \cite{Morse'24}. B) We assume that $u$ are
parameterized by a function $h$ -- the hull function -- and a
frequency $\omega \in \mathbb{R}^d$ such that
\begin{equation} \label{hullform}
u_i = h(\omega \cdot i)
\end{equation}
and derive a variational principle for $h$.

\subsection{Heuristic derivation of the Percival
Lagrangian}\label{heuristic}
The heuristic derivation of the variational principle in B) is as
follows \cite{Percival'79}. If we assume solutions of the form
\eqref{hullform}, considering a big box and normalizing the
Lagrangian (which does not change the minima or critical points), we
are led to considering
 \begin{equation*}
    \mathscr{L}_{N,\omega}(u)=
\frac{1}{N^d} \sum_{i\in\mathbb{Z}^d, |i| \le N}\sum_{j=1}^d
    H_{j}(h(\omega \cdot i), h(\omega\cdot i + \omega_j))
\end{equation*}
Heuristically, for
$N\rightarrow\infty,~\mathscr{L}_{N,\omega}\rightarrow\mathscr{P}_\omega$
where
\begin{equation}\label{Percivalian}
   \mathscr{P}_\omega(h) =
   \sum_{j=1}^d \int_0^{1} H_{j}(h(\theta), h(\theta +\omega_j))
\end{equation}
This heuristic derivation shows that given a solution $h$ of
$\omega$'s of the form \eqref{hullform}, $\mathscr{P}_\omega(h)$ has
a direct physical interpretation as the energy per volume.

In a similar heuristic way, we can argue that the Euler-Lagrange
equations for $\mathscr{P}_\omega$ are obtained by computing
\[
\begin{split}
&\mathscr{P}_\omega( h +\varepsilon \eta)  - \mathscr{P}_\omega(h)\\
=& \varepsilon  \int_0^1 d\theta \sum_j
 \partial_1 H_j ( h(\theta), h(\theta + \omega_j)) \eta(\theta)
+ \partial_1 H_j ( h(\theta), h(\theta + \omega_j)) \eta(\theta +
\omega_j)
+ O(\varepsilon^2)  \\
=& \int_0^1 d \theta\, \big[ \sum_j \partial_1 H_j ( h(\theta),
h(\theta + \omega_j)) + \partial_2 H_j( h(\theta - \omega_j) ,
h(\theta)) \big] \eta(\theta) + O(\varepsilon^2)
\end{split}
\]
If $\eta$ is arbitrary, the Euler-Lagrange equations should be:

\begin{equation}\label{EulerLagrangePercival}
X(h)\equiv
 \sum_j \partial_1 H_j ( h(\theta), h(\theta + \omega_j))
+ \partial_2 H_j( h(\theta - \omega_j) , h(\theta)) = 0
\end{equation}
Of course, the above heuristic derivation is rather imprecise since,
depending on the space of $h$'s we consider the variations allowed
may not be arbitrary and minimizers may not satisfy Euler-Lagrange
equation.

Besides the heuristic derivation, the paper \cite{Percival'79} used
this formalism as a very effective numerical method to compute
quasi-periodic solutions.

We also note that this formalism can be used as the basis of KAM
theory to produce smooth solutions under some assumptions (
Diophantine properties of the frequencies that the system is close
to integrable, etc.) (see \cite{MR982563,MR2507323})

The  rigorous study of \eqref{Percivalian} that we will pursue  here
entails
\begin{itemize}
\item{I)}  To identify appropriate spaces in which one can
study $\mathscr{P}_\omega$ and show that it has a minimizer
satisfying geometric properties.
\item{II)} To show that the minimizer of $\mathscr{P}_\omega$
satisfies the Euler-Lagrange equations.
\item{III)} To show  that the minimizers thus obtained,
correspond to minimizers in the formalism A)
\item{IV)} To show existence of other critical points and
their properties provided there are two minimizers that are
essentially different.
\end{itemize}

We point out that there are different tradeoffs. If we choose a very
restrictive space on which to consider the minimization problem
(i.e. a space of functions enjoying many properties), then it
becomes hard to show that the minimizer exists and that it satisfies
the Euler-Lagrange equations. On the other hand, if we choose very
general spaces, the minimizers may become useless. One has to
consider spaces general enough so that minimizers exist and satisfy
Euler-Lagrange equations, but restrictive enough so that they
satisfy enough properties that can be bootstrapped. This compromise
is, of course, far from unique and we will make a point of showing
several such compromises.

Another point to keep in mind is that the problem of existence of
minimizers can be approximated by simpler ones (under rather soft
assumptions the limit of minimizers of a sequence of problems is a
minimizer of the limiting problem \cite{Morse'73}). On the other
hand, passing to the limit on multiplicity results is difficult
because the limits of two different solutions of the approximating
problems could be the same.

Even if the results that we will obtain have already been obtained
(in the $d=1$ case), the spaces that we choose are different and we
obtain shorter proofs. For the proof in IV) we use a gradient flow
approach.

We refer to \cite{Bangert'88,Forni'96, Gole'01} for surveys of the
classical results. Notably, the results of existence of minimizers
were obtained by the hull function approach in \cite{Mather'82}, the
critical points in \cite{Mather'86}. Besides the fact that we deal
with $d > 1$ and more general lattices, we think it is worthwhile to
present the arguments in a coherent way.  It is also interesting to
compare the approach presented in this paper with that in
\cite{Rafael'07a} which covers similar ground (it includes weak
twist, and long range interactions) using methods based on orbit
spaces.

Of course, Aubry-Mather theory has grown well beyond the results
that we consider here and now includes  studies of other  objects
such as Mather measures, Ma\~n\'e critical values, which lead to
applications to construction of connecting orbits, viscosity
solutions, transport theory, multi-bump solutions etc. Some surveys
on these more recent aspects are \cite{Manebook,Mane'96b,
Contreras'99, Fathi'97, Figalli}.

\begin{Remark}
It is very important to note that the variational principles are
degenerate in the sense that minimizers will not be unique. We note
that if $u_i$ is a ground state  (resp. a critical point) of
$\mathscr{L}$,  then, for any $k\in\mathbb{Z}^d$ so is $\tilde u$
defined by $\tilde u_i = u_{i+k}$.

Similarly, if $u$ is a minimizer (resp. a critical point) of
$\mathscr{P}_\omega$, so is $\tilde h$ defined by $\tilde h(\theta)
= h(\theta + a)$.

Note that, in the hull function formalism, the symmetries are
continuous symmetries indexed by the real number $a$ whereas in the
orbit formalism, the symmetries are indexed by $k\in\mathbb{Z}^d$.

Later on, we will see that our assumptions on $H$ will imply other
symmetries of the problem.
\end{Remark}

\begin{Remark}
We note that the methods we consider can be extended with only
typographical changes in  the formulas to interactions of infinite
range and involving many bodies
\begin{equation}\label{lagrangianextended}
    \mathscr{L}(u)= \sum_{L \in \mathbb{N}}
\sum_{i\in\mathbb{Z}^d}
    H_{L}(\Tau_i u)
\end{equation}
where $H_{L}(u) $ depends only  $\{u_j\}_{|j| \le L}$ and $\Tau_i$
is the translation. Of course, one needs to assume that the
interactions decrease fast enough with the distance $L$.

It is easy to see that when we consider \eqref{lagrangianextended},
the corresponding variational principle for the hull functions
\begin{equation} \label{percivalianextended}
    \mathscr{P}_\omega(h)=
\int_0^1 d\theta \sum_{L \in \mathbb{N}}
    H_{L}(h)(\theta)
\end{equation}
where $H_L(h)$ is the function obtained replacing $u_j$ by $h(\theta
+ \omega\cdot j) $.
\end{Remark}

%\begin{Remark}
%On the one hand, the methods are based on hull functions are not
%just restricted to dealing with commutative lattices $\mathbb{Z}^d$.
%(This is connected, perhaps with the derivation suggesting that the
%set of points is amenable). On the other hand, these methods can
%also deal with models defined in the Bethe lattice and other
%non-amenable graphs (see Appendix A). See
%\cite{Rafael'98,Rafael'07a,Rafael'10} for the methods based on
%studying spaces of orbits in this direction.  The above papers also
%deal with rather general long-range interactions.

%Similarly, we do not know of a hull function formulation of
%multi-bump solutions.
%\end{Remark}

\begin{Remark}
In Appendix A, we will show how the method of hull functions can be
extended to study configurations $u:\Lambda\rightarrow \mathbb{R}$
when $\Lambda$ is , e.g. the Bethe lattice.

This is somewhat surprising because the heuristic derivation
outlined in Section ~\ref{heuristic} uses that $\mathbb{Z}^d$ is
amenable and the Bethe lattice is not amenable.
\end{Remark}

\begin{Remark}
Note that in point $III)$ we show that the minimizers of the hull
function approach give rise to minimizing sequences, which also
satisfy some growth properties at $\infty$ and several monotonicity
properties.

It seems to be an open question to decide whether there are
converses to $III)$. That is, whether the minimizers satisfying
several order and growth properties are of the form
\eqref{hullform}, in particular, they depend on just one variable.

There are several versions of these questions formulated for PDE's
in \cite{Moser'86,Bangert'89}. In \cite{JungingerV} one can find
that there is a close relation between these questions and a famous
conjecture by De Giorgi. Indeed in \cite{JungingerV,FarinaV} one can
find counterexamples to the PDE version in high enough dimension.

It would be interesting to study these questions in the setting
considered in the present paper.

We recall that De Giorgi conjecture asks whether solutions
$u(x_1,\ldots,x_d)$ of $\triangle u=u-u^3$ which are monotone in
$x_d$, are indeed functions of $\omega\cdot x$, for some
$\omega\in\mathbb{R}^d$ (at least in dimension $d\leq8$).

In Aubry-Mather theory, one considers periodic potentials and one
allows instead of $\triangle$ an elliptic operator with periodic
potentials.

The fact that minimizers are functions of $\omega\cdot x$ is quite
analogous to the fact that they are given by a hull function.
\end{Remark}

\begin{Remark}
The approach based on the Lagrangian \eqref{Percivalian} has been
shown to be a very effective numerical tool \cite{Percival'79}. It
has also been used as the basis of a KAM theory
\cite{LM'01,Rafael'08,MR2507323,SuL11}.
\end{Remark}

In Section~\ref{background}, we will recall the standard definitions
in the calculus of variations adapted to our situation. In
Section~\ref{assumptions}, we will
 detail the assumptions of
our Theorems which we will state and prove in Section 3 (existence
of minimizers), Section 4 (minimizers are ground states) and Section
5 (existence of other critical points).

\subsection{Standing assumptions on the $H_j$}
\label{assumptions}

In order to implement the above program, we will use several
assumptions on the variational principle.

$H_j:\mathbb{R}^2\rightarrow\mathbb{R}$ satisfies periodic condition
(H1) and negative twist condition (H2):
\begin{description}
\item [(H1)] $H_{j}(u+1,v+1)=H_{j}(u,v) \qquad\forall~u,~v\in\mathbb{R},~j=1,\ldots,d$
\item [(H2)] $H_j\in C^2$ and $\partial_{1}\partial_{2}H_{j}\leq c<0,~~ j=1,2,\ldots,d$.
\item [(H3)] $H_j,~j=1,\ldots,d$ have a lower bound.
\end{description}

These assumptions are very representative of the assumptions
customary in Aubry-Mather theory, even if  they can be weakened
slightly.

As a consequence of (H1), we see that the functional
$\mathscr{P}_\omega$ has the following symmetries:
\begin{itemize}
\item (a)~~$\mathscr{P}_\omega(h)=\mathscr{P}_\omega(h+1)$;
\item (b)~~$\mathscr{P}_\omega(h)=\mathscr{P}_\omega(h\circ T_a)$ where $T_a(x)=x+a$.
\end{itemize}
Note that these symmetries make the variational problem
``degenerate". As often used in the calculus of variations one can
overcome this degeneracy by formulating the problem in appropriate
quotient spaces (see \cite{Palais'79}) for a discussion of these
questions.

\begin{Definition}
We call $\omega$ non-resonant if
$\omega_1,\omega_2,\ldots,\omega_d,1$ are rationally independent and
resonant otherwise.
\end{Definition}

\section{Preliminaries}
\label{background}

In this section, we collect some standard definitions from the
calculus of variations that we will use. This section contains only
standard definitions and elementary results and should be used only
as reference.

\subsection{Basic definitions in classical
calculus of variations} We start by summarizing the main concepts in
the sequences approach. This is not the basis of our approach, but
eventually, we will show that the solutions obtained by the hull
function approach lead to sequences which are minimizers in the
sense of calculus of variations.

According to \cite{Morse'24},
\begin{Definition}
A configuration $u: \mathbb{Z}^d \rightarrow \mathbb{R} $
 is
called a class-A minimizer for \eqref{lagrangian} when  for every
$\varphi: \mathbb{Z}^d \rightarrow \mathbb{R}$ with $\varphi_i = 0$
when $|i| \ge N$, we have
\begin{equation}\label{classa}
\sum_{i\in\mathbb{Z}^d, |i| \le N+1 }\sum_{j=1}^d
    H_{j}(u_{i}+ \varphi_i ,u_{i+e_j} + \varphi_{i + e_j} )
\ge \sum_{i\in\mathbb{Z}^d, |i| \le N+1 }\sum_{j=1}^d
    H_{j}(u_{i},u_{i+e_j} )
\end{equation}
\end{Definition}

The equation \eqref{classa} can be interpreted heuristically as
saying $\mathscr{L}(u + \varphi) \ge \mathscr{L}(u )$ after we
cancel the terms on both sides that are identical.

Class-A minimizers are also called \emph{ground states} in the
mathematical physics literature and  \emph{local minimizers} in the
calculus of variations literature.

\begin{Definition}
We say that a configuration is a critical point of the action
whenever it satisfies the Euler-Lagrange equations for every $i \in
\mathbb{Z}^d$
\begin{equation}\label{EulerLagrange}
\sum_j \partial_1 H_j( u_i, u_{i + e_j})  +
\partial_2 H_j( u_{i - e_j}, u_i)  = 0
\end{equation}
\end{Definition}

The equations \eqref{EulerLagrange} are heuristically
$\partial_{u_i} \mathscr{L}(u) = 0$. Note that, even if the sum in
\eqref{lagrangian} is purely formal, the system of  equations
\eqref{EulerLagrange} is well defined. For every $i \in
\mathbb{Z}^d$, equation \eqref{EulerLagrange} involves only a finite
sum of terms.

By considering $\varphi_i = \varepsilon \delta_{i,j}$ where
$\delta_{i,j}$ is the Kronecker delta, it is easy to see that if $u$
is a ground state, then, it satisfies the Euler-Lagrange equations
\eqref{EulerLagrange}. The converse is certainly not true.

\subsection{Order properties of configurations}

Order properties of configurations play a very important role in
Aubry-Mather theory.

The following is a standard definition.
\begin{Definition}\label{Birkhoffconfiguration}
We say that $u:\mathbb{Z}^d\rightarrow \mathbb{R}$ is a Birkhoff
configuration if for every $k\in\mathbb{Z}^d,~~l\in\mathbb{Z}$, we
have either
 \[u_{i+k}+l\geq u_i\qquad \forall~ i\in\mathbb{Z}^d \]
or
\[u_{i+k}+l\leq u_i\qquad \forall~ i\in\mathbb{Z}^d \]
\end{Definition}

In other words, the graph of $u$ does not intersect its horizontal
or vertical translations by integer vectors.

We also have
\begin{equation}\label{rotation}
\omega_j  = \lim_{n \to \pm \infty}\frac{1}{n}  (u_{i + n e_j} -
u_{i})
\end{equation}
and the limit is reached uniformly in $i$.

The notion of Birkhoff configurations was introduced in
\cite{Mather'82}. The name Birkhoff configurations appeared in
\cite{Katok}. These configurations are also called self-conforming
or non-self-intersecting. Their relevance to classical problems in
calculus of variation was emphasized in \cite{Moser'86}.

Birkhoff order properties are closely related to hull functions.

We note that if $h$ is monotone, $h(\theta+1)=h(\theta)+1$ and
$\omega\in \mathbb{R}^d$, then
\[
u_i=h(i\cdot \omega).
\]
Then
\[
u_{i+k}+l=h(i\cdot \omega+k\cdot\omega)+l=h(i\cdot \omega+k\cdot
\omega+l)
\]
and if $k\cdot \omega+l\geq0$, then
$h(i\cdot\omega+k\cdot\omega+l)\geq h(i\cdot\omega)=u_i$.

Therefore, configurations given by hull functions satisfy the
following
\begin{Definition}
 Let $\omega\in \mathbb{R}^d$. We say that $u:\mathbb{Z}^d\rightarrow\mathbb{R} $ is $\omega$-Birkhoff
if
\[\omega\cdot k+l\geq 0, ~~k\in\mathbb{Z}^d,~~l\in\mathbb{Z}\]
implies
\[
u_{i+k}+l\geq u_i\qquad \forall~ i\in\mathbb{Z}^d.
\]
Equivalently,
\[\omega\cdot k+l\leq 0, ~~k\in\mathbb{Z}^d,~~l\in\mathbb{Z}\]
implies
\[
u_{i+k}+l\leq u_i\qquad \forall~ i\in\mathbb{Z}^d.
\]
\end{Definition}

Clearly, $\omega$-Birkhoff configurations are Birkhoff. That is why
\cite{Rafael'07a, Rafael'10} formulated existence and multiplicity
results for $\omega$-Birkhoff orbits.

The converse is close to being true, but it is not exactly true.

First, we note that, given $u$ Birkhoff, there is one and only one
candidate for $\omega$ which would make it $\omega$-Birkhoff
(analogue of rotation number). If this $\omega$ turns out to be
irrationally related, (i.e., $\omega\cdot k+l=0,
~~k\in\mathbb{Z}^d,~~l\in\mathbb{Z}\Longrightarrow k=0,~l=0$) then,
$u$ is $\omega$-Birkhoff. If $\omega$ has some relations, in Remark
\ref{notomegabirkhoff} we will present examples of Birkhoff orbits
with $\omega$ rotation vector which are not $\omega$-Birkhoff.

\begin{Proposition}\label{rotnum}
 Assume $u$ is Birkhoff. Then, there exists $\omega\in\mathbb{R}^d$ such that
\begin{equation*}
 \lim_{n\rightarrow \infty}\frac{1}{n}[u_{i+k\cdot n}-u_i]=\omega\cdot k.
\end{equation*}
 Furthermore,
\[
 |u_{i+k\cdot n}-u_i-n\omega\cdot k|\leq 2
\]
and, if $\omega\cdot k+l>0$ (resp. $\omega\cdot k+l<0$) for
$k\in\mathbb{Z}^d,~~l\in\mathbb{Z}$, we have
\[
 u_{i+k}+l>u_i \qquad \forall~i\in\mathbb{Z}.
\]
(resp. $u_{i+k}+l<u_i \qquad \forall~i\in\mathbb{Z}$)
\end{Proposition}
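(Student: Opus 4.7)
The plan is to extract $\omega$ from the Birkhoff dichotomy by a Fekete subadditivity argument applied to the sup/inf of the displacements $u_{i+k}-u_i$, and then to read off the uniform bound and the strict monotonicity from the same two quantities.

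First I introduce
\[
M(k):=\sup_{i\in\mathbb{Z}^d}(u_{i+k}-u_i),\qquad m(k):=\inf_{i\in\mathbb{Z}^d}(u_{i+k}-u_i).
\]
The key elementary observation is that the Birkhoff dichotomy forces the gap bound $M(k)-m(k)\le 1$: for any integer $l$, the alternative ``$u_{i+k}+l\ge u_i$ for all $i$'' is equivalent to $-l\le m(k)$, and ``$u_{i+k}+l\le u_i$ for all $i$'' is equivalent to $-l\ge M(k)$, so no integer can lie in the open interval $(-M(k),-m(k))$, which forces its length to be at most $1$. The telescoping identity $u_{i+k_1+k_2}-u_i=(u_{(i+k_1)+k_2}-u_{i+k_1})+(u_{i+k_1}-u_i)$ then yields subadditivity of $M$ and superadditivity of $m$ on $\mathbb{Z}^d$.

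Next I apply Fekete's lemma to $n\mapsto M(nk)$ and $n\mapsto m(nk)$, which converge to $\inf_n M(nk)/n$ and $\sup_n m(nk)/n$ respectively. The uniform gap $M(nk)-m(nk)\le 1$ forces the two limits to coincide, and I call the common value $\omega(k)$. Passing the sub/super-additivity of $M,m$ through the limit (with $k_1,k_2$ replaced by $nk_1,nk_2$ and dividing by $n$) upgrades $\omega$ to a fully additive function on $\mathbb{Z}^d$, hence $\omega(k)=\omega\cdot k$ for the vector $\omega\in\mathbb{R}^d$ with components $\omega_j:=\omega(e_j)$. The uniform estimate is then immediate: Fekete supplies $m(nk)\le n\omega\cdot k\le M(nk)$, and since $u_{i+nk}-u_i$ also lies in $[m(nk),M(nk)]$, one has $|u_{i+nk}-u_i-n\omega\cdot k|\le M(nk)-m(nk)\le 1\le 2$; dividing by $n$ gives the claimed uniform-in-$i$ convergence of the rotation quotient.

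For the strict monotonicity I argue by eliminating the wrong branch of the Birkhoff dichotomy. If $\omega\cdot k+l>0$, then the option $u_{i+k}+l\le u_i$ for all $i$ is impossible, because iterating it would give $u_{i+nk}+nl\le u_i$ and hence $\omega\cdot k\le -l$, contradicting the hypothesis; so the remaining alternative $u_{i+k}+l\ge u_i$ holds everywhere. The anticipated main obstacle is upgrading this to the strict inequality at every index: if $u_{i+k}+l=u_i$ held simultaneously at every $i$, iteration would yield $u_{i+nk}+nl=u_i$ for all $n$, forcing $\omega\cdot k=-l$ and again contradicting $\omega\cdot k+l>0$. The symmetric case $\omega\cdot k+l<0$ is handled identically.
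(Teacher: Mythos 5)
Your construction of $\omega$ is correct and takes a genuinely different route from the paper. Where the paper builds $\omega$ by a Dedekind-cut argument on the cones $A_\ge$, $A_\le$ and defers the existence of the limit to Poincar\'e's rotation-number argument, you work with $M(k)=\sup_i(u_{i+k}-u_i)$ and $m(k)=\inf_i(u_{i+k}-u_i)$, extract the integer-gap bound $M(k)-m(k)\le 1$ from the dichotomy, and run Fekete's lemma. This is more self-contained and quantitative (it even yields the sharper constant $|u_{i+nk}-u_i-n\,\omega\cdot k|\le 1$ in place of $2$), and the additivity of $k\mapsto\omega(k)$ drops out cleanly by passing sub/superadditivity through the limit. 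The elimination of the wrong branch of the dichotomy when $\omega\cdot k+l>0$ is also fine and matches the paper.

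The upgrade to the \emph{strict} inequality, however, has a genuine gap. Once you know $u_{i+k}+l\ge u_i$ for all $i$, the negation of the desired conclusion is that $u_{i_0+k}+l=u_{i_0}$ for \emph{some single} $i_0$; your argument only excludes equality holding \emph{simultaneously at every} $i$ (i.e.\ $(k,l)\in A_\le$). A single equality cannot be iterated: from $u_{i_0+k}+l=u_{i_0}$ and the non-strict inequality one only gets $u_{i_0+nk}+nl\ge u_{i_0}$, which is perfectly consistent with $\omega\cdot k+l>0$. In fact the strict form fails for general Birkhoff configurations: take $d=1$, $h$ monotone with $h(\theta+1)=h(\theta)+1$ and $h\equiv 0$ on $[0,1/2]$, and $\omega\in(0,1/2)$ irrational; then $u_i=h(i\omega)$ is ($\omega$-)Birkhoff with rotation number $\omega$, yet $u_{1}+0=u_{0}$ although $\omega\cdot 1+0>0$. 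What your argument (and, on inspection, the paper's own sketch, which argues via the contrapositive) actually establishes is the non-strict conclusion $u_{i+k}+l\ge u_i$ for all $i$, together with strictness at \emph{some} $i$ --- and that is all that is invoked later when comparing Birkhoff with $\omega$-Birkhoff configurations. To get strictness at every index you would need an additional hypothesis (e.g.\ strict monotonicity of the hull function, or minimality combined with an Aubry-type crossing argument); as written, this step of your proof should be flagged as proving only the weak inequality.
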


We note that given a Birkhoff configuration, the sets
\[
 A_\geq=\{(k,l)\in\mathbb{Z}^d\times\mathbb{Z}~|~u_{i+k}+l\geq u_i\},
\]
\[
 A_\leq =\{(k,l)\in\mathbb{Z}^d\times\mathbb{Z}~|~u_{i+k}+l\leq u_i\}
\]
and
\[
 A_= =\{(k,l)\in\mathbb{Z}^d\times\mathbb{Z}~|~u_{i+k}+l= u_i\}
\]
are respectively cones and subspaces. If $(k_1,l_1),~(k_2,l_2)\in
A_\geq$, then $\forall i\in \mathbb{Z}^d$
\[
 u_{i+(k_1+k_2)}+l_1+l_2\geq u_{i+k_1}+l_1\geq u_i
\]
Since $u$ is Birkhoff $A_\geq\cup
A_\leq=\mathbb{Z}^d\times\mathbb{Z}$ and $A_\geq\cap A_\leq=A_=$ (it
could be open).

Proceeding as in the theory of Dedekind cuts, we can find a unique
$\omega\in\mathbb{R}^d$ such that
\begin{align}
 &A_\geq =\{\omega\cdot k+l\geq 0\}\nonumber\\
 &A_\leq =\{\omega\cdot k+l\leq 0\}\nonumber\\
 &A_= =\{\omega\cdot k+l= 0\}.\nonumber
\end{align}
The proof of the existence of the limit can be done exactly as in
the proof of the rotation number in \cite{Poincare}. (See
\cite{Kra'96} for a proof in the context of commuting
diffeomorphisms or \cite{Rafael'98}.)

If there exists $i\in\mathbb{Z}^d$ such that
\[
 u_{i+k}+l\geq u_i.
\]
Because $u$ is Birkhoff, we should have the inequality of all $i$
having
\[
 u_{i+n\cdot k} +n\cdot l\geq u_i
\]
and, taking limits $\omega\cdot k+l\leq 0$. Similarly, we have that
if $u_{i+k}+l\leq u_i, ~~\omega\cdot k+l\geq 0$.

Therefore, we see, comparing with $u_0$ that
\[
 |u_i-u_0-\omega\cdot k|\leq 1.
\]
This, of course establishes that the limit defining the rotation
number is reached uniformly.

\begin{Remark}
We note that Proposition~\ref{rotnum} uses essentially the fact that
we are considering configurations on $\mathbb{Z}^d$, which is a
commutative group. If we consider configurations in non-commutative
groups, it is not clear that for configurations satisfying
Definition~\ref{Birkhoffconfiguration}, we have that the limit
\eqref{rotation} exists and has good properties. Therefore in
\cite{Rafael'07a, Rafael'10}, the Birkhoff orbits are defined as
those which satisfy the conclusions of Proposition~\ref{rotnum}. In
our context, both are equivalent. There are definitions of
$\omega$-Birkhoff orbits for more general latices in Appendix A.
\end{Remark}

\begin{Remark}\label{notomegabirkhoff}
In view of Proposition \ref{rotnum}, the main difference between
$\omega$-Birkhoff and Birkhoff is that when $\omega\cdot k+l=0$,
Birkhoff only claims that we can compare $u_{i+k}+l$ and $u_i$ with
the same sign. The $\omega$-Birkhoff claims that since we have both
inequalities $u_{i+k}+l=u_i$.

This shows how to construct solutions which are Birkhoff with
rotation vector $\omega$ but not $\omega$-Birkhoff. For example in
$d=1$, let $f$ be an orientation preserving diffeomorphism of the
circle with rotation number $=\frac{1}{2}$ with isolated periodic
points of period 2. Any orbit is a Birkhoff sequence, but only the
periodic orbits are $\frac{1}{2}$-Birkhoff.
\end{Remark}

\begin{Proposition}
Assume that $u_i$ is an $\omega$-Birkhoff configuration. Then, there
exists $h:\mathbb{R}\rightarrow\mathbb{R},~~h(\theta+1)=h(\theta)+1$
monotone such that \eqref{hullform} holds
\end{Proposition}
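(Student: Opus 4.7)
The plan is to first construct $h$ on the countable set $G = \{\omega\cdot k + l : k\in\mathbb{Z}^d,\ l\in\mathbb{Z}\}\subset\mathbb{R}$ by the only possible rule dictated by \eqref{hullform}, namely
\[
h(\omega\cdot k + l) := u_k + l,
\]
and then extend $h$ monotonically to all of $\mathbb{R}$.

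The first task is to verify that this definition is consistent. Suppose $\omega\cdot k_1 + l_1 = \omega\cdot k_2 + l_2$. Then $\omega\cdot(k_1-k_2) + (l_1-l_2) = 0$ and also $\omega\cdot(k_2-k_1) + (l_2-l_1) = 0$, so both hypotheses of the $\omega$-Birkhoff definition apply. Taking $i=k_2$ in the first and $i=k_1$ in the second yields $u_{k_1}+(l_1-l_2) \ge u_{k_2}$ and $u_{k_2}+(l_2-l_1)\ge u_{k_1}$, hence $u_{k_1}+l_1 = u_{k_2}+l_2$. This is exactly the point (highlighted in Remark~\ref{notomegabirkhoff}) where the $\omega$-Birkhoff hypothesis, rather than the plain Birkhoff hypothesis, is essential.

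The second task is monotonicity on $G$. If $\omega\cdot k_1 + l_1 \le \omega\cdot k_2 + l_2$, then $\omega\cdot(k_2-k_1)+(l_2-l_1)\ge 0$, so the $\omega$-Birkhoff property applied with $i=k_1$, $k=k_2-k_1$, $l=l_2-l_1$ gives $u_{k_2}+(l_2-l_1)\ge u_{k_1}$, i.e.\ $h(\omega\cdot k_2 + l_2)\ge h(\omega\cdot k_1 + l_1)$. Periodicity is immediate: $h(\omega\cdot k + l + 1)=u_k+l+1 = h(\omega\cdot k+l)+1$, so $h(\theta+1)=h(\theta)+1$ holds on $G$.

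The final task is the extension from $G$ to $\mathbb{R}$. Proposition~\ref{rotnum} provides a uniform estimate $|u_k - u_0 - \omega\cdot k| \le 2$, which translates to $|h(g)-g-u_0|\le 2$ for all $g\in G$; hence $h$ is bounded on bounded pieces of $G$. Define
\[
\tilde h(\theta) := \sup\{\,h(g) : g\in G,\ g\le \theta\,\}, \qquad \theta\in\mathbb{R}.
\]
The monotonicity of $h$ on $G$ gives $\tilde h = h$ on $G$, and $\tilde h$ is clearly nondecreasing on $\mathbb{R}$ and satisfies $\tilde h(\theta+1)=\tilde h(\theta)+1$ because $G+1=G$ and $h$ shifts by $1$. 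In particular $\tilde h(\omega\cdot i)=u_i$, establishing \eqref{hullform}. The expected main difficulty is really only the well-definedness step; once the $\omega$-Birkhoff hypothesis collapses both inequalities into an equality on the kernel $\{\omega\cdot k+l=0\}$, everything else is a routine monotone-extension argument. In the resonant case $G$ may fail to be dense and the extension is not unique on the gaps, but any monotone choice such as the supremum above works; in the non-resonant case $G$ is dense and one additionally obtains that $\tilde h$ is the unique monotone extension up to its countably many jump discontinuities.
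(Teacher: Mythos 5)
Your proof is correct and follows essentially the same route as the paper's: define $h$ on the countable set $\{\omega\cdot k+l\}$ by the forced formula, use the $\omega$-Birkhoff property for consistency and monotonicity, and extend monotonically using the periodicity in $l$. You simply make explicit the well-definedness check and the supremum extension that the paper's terser argument leaves implicit.
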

\begin{proof}
The definition of $\omega$-Birkhoff shows that
\[
 u_i+l-u_0
\]
satisfies the same order relation $\omega\cdot i+l$.

Therefore, if we write $u_i+l-u_0$ as a function of $\omega\cdot
i+l$, we will obtain a monotone function $h$ defined on the set
$\{\omega\cdot i+l\}_{i\in \mathbb{Z}^d,~l\in\mathbb{Z}}$. It can be
extended to a monotone function on $[0,1]$.

We also note that because of the way that $l$ enters, we obtain
$h(\theta+1)=h(\theta)+1$. Hence, we can extend the function $h$ to
a hull function.
\end{proof}

\subsection{Spaces for hull  functions, topology and order}

As we indicated, we will present two proofs of Theorem 1. The main
trade-off is between establishing the validity of the Euler-Lagrange
equations and  establishing properties of the minimizers. If we
include spaces of functions that incorporate many properties, then
these properties are, of course, true for the spaces, but, then, it
is hard to establish the Euler-Lagrange equations because we may be
at the boundary of the spaces.

We will start by indicating two different spaces.

\subsubsection{Two spaces of hull functions}\label{definition of
Y^*}

We define the space of functions
\begin{equation}
Y=\{~h~|~h~\text{monotone},~h(\theta+1)=h(\theta)+1,~h(\theta_-)=h(\theta)\}
\end{equation}
This is the space of functions which are monotone -- and therefore
have at most countably many points of discontinuity -- we assume
that the functions are continuous on the left.

Now, we turn to give $Y$ a topology and collect some of the
properties.

We first define
\[{\text{graph}}(h)=\{(\theta,y)\in\mathbb{R}^2:h(\theta)\leq y\leq
h(\theta_+)\}.\] If $h, \tilde{h} \in Y$ we define the distance as
the Hausdorff distance of the graphs.
\begin{equation} \label{graphdistance}
d(h,\tilde{h})=\max \{\sup_{ \xi \in {\text{graph}}(h)} \rho( \xi,
{\text{graph}}(\tilde{h})),
  \sup_{\eta \in {\text{graph}}(\tilde h)}\rho(\eta,{\text{graph}}(h))\}
\end{equation}
 where
 $\rho(\cdot,\cdot)$ is the Euclidean distance from a point to a
set, $\rho(x, S) = \inf_{y \in S} | y -x| $. Note that the graph
topology is weaker than the $L^\infty$ topology.

It is a standard result that the functions $h \in Y$ can be
identified with non-negative periodic  Borel probability measures
times the reals by $h(x) = \mu( [0, x]) + h(0)$. The topology
induced by the distance in \eqref{graphdistance} is the same as the
topology induced by the weak-* convergence in the unit interval. In
dynamics, the measures associated to $h$'s that satisfy the
Euler-Lagrange equation (4) are called Mather measures and are the
basic objects for extending Aubry-Mather theory to higher
codimension in \cite{Mather'89,Mather'91,Mane'96,Contreras'99}.

It is a standard result that
 $Y/\mathbb{R}=Y/\!\!\sim$ is compact
 where $\sim$ is the equivalence relation
defined by $h\!\sim\!\tilde{h}\Leftrightarrow\exists~a$, such that
$\tilde{h}=h\circ T_a$ where $T_a(\theta)=\theta+a$ is a translation
function for all $ \theta,~a\in\mathbb{R}$. Indeed, $Y/\!\!\sim$ is
isomorphic to probability measures on the circle endowed with the
weak-* topology  times the circle. The first factor is compact
because of Banach-Alaoglu theorem and Riesz representation theorem.

Another space that we will consider is $Y^*_N = \{ ~~h \in
L^\infty_{\rm loc}~~~ | ~~~h (\theta +N) = h(\theta) + N\}$ for any
$N\in\mathbb{Z}$.

We consider it endowed with the topology of pointwise convergence.
By Tikhonov theorem, subsets of $Y^*\equiv Y_1^*$ which are bounded
in $||\cdot||_{L^\infty} $ are precompact.

Compared with $Y$, the space $Y^*$ is more flexible because it does
not have the constraint of monotonicity.

\subsubsection{Order properties}

Also, we endow $L^\infty\supseteq Y$ with a partial order given by
$h<\tilde{h}\Leftrightarrow h(\theta)\leq \tilde{h}(\theta)$ for all
$\theta\in\mathbb{R}$ and $h\nequiv\tilde{h}$. We write
$h\prec\!\!\prec\tilde{h}$ to denote $h(\theta) < \tilde{h}(\theta)$
for all $\theta\in \mathbb{R}$.

A small corollary is that, given two functions $h_- \leq h_+$,
$\{~~h \in Y \,~~~ | h_- \leq h \leq h_+ \}$ is compact with the
graph topology. It is clear that it is a closed set of a compact
set.

The analogous set in $Y^*$ $\{~~h \in Y^* \, |~~~ h_- \leq h \leq
h_+ \}$ is also compact for the pointwise convergence topology.

\subsubsection{Some background in lattice theory}

\begin{Definition}[Lattice]
A lattice is a partially ordered set any two of whose elements have
a greatest lower bound and a least upper bound.
\end{Definition}
\begin{Definition}[Complete lattice]
A lattice $\Lambda$ is complete if each $X\subseteq\Lambda$ has a
least upper bound and a greatest lower bound in $\Lambda$.
\end{Definition}
The set $Y_n^*$ has a natural lattice structure induced by the
canonical lattice operations on the real line, i.e.
\[h\vee\tilde{h}(\theta)=\max\{h(\theta),\tilde{h}(\theta)\},\qquad h\wedge\tilde{h}(\theta)=\min\{h(\theta),\tilde{h}(\theta)\}.\]
where $h,\tilde{h}\in Y_n^*$. It is easily seen that
$h\vee\tilde{h},h\wedge\tilde{h}\in Y_n^*$. $Y_1^*$ is not complete
because it includes an $\mathbb{R}$ factor but the subspace
$Y/\!\!\sim$ is.

\begin{Remark}\label{Definitionofa}
The space $Y/\!\!\sim$ is the basis of \cite{Mather'82b}. The paper
\cite{Mather'82} uses the space
\begin{equation*}
X=\{~h\in Y~|~h(\theta)\geq 0 ~for~ \theta>0 ~and~ h(\theta)\leq0
~for~ \theta\leq 0\}.
\end{equation*}
The space X is also compact as shown in \cite{Mather'82}.

The idea is somewhat similar. The reason why Y is not compact is
because it contains an $\mathbb{R}$ factor. (The Borel measure
factor is compact by Banach-Alaoglu theorem.)

Because of the symmetries (a),(b) of the variational principle, we
can formulate the variational problem on ``normalized'' $h$'s. If we
use (a) to normalize $h$ by adding integers we are led to
$Y/\!\!\sim$. If we use (b) to normalize the $h$ by composing with
$a$ translation $T_a$ where $a=\inf\{x~~|~~h(x)\geq0\}$, we are led
to X.
\end{Remark}

In a complete lattice $\Lambda$, we define the order-converge of any
net $\{h_\alpha\}\subseteq\Lambda$. We say that $h_\alpha$ order
converges when
\[\liminf\{h_\alpha\}=\limsup\{h_\alpha\}
\]
where $\liminf\{h_\alpha\}\equiv\sup_\beta\{\inf_{\alpha\geq\beta}
h_\alpha\}$ and
$\limsup\{h_\alpha\}\equiv\inf_\beta\{\sup_{\alpha\geq\beta}
h_\alpha\}$.
\begin{Definition}
A real-valued function $\mathscr{P}$ on a complete lattice $\Lambda$
is called lower semi-continuous if
\[\mathscr{P}(\lim_{j\rightarrow\infty}
h_j)\leq\liminf_{j\rightarrow\infty}\mathscr{P}(h_j)\]
\end{Definition}
whenever the limit exists in $\Lambda$ with respect to the
order-convergence.
\begin{Definition}[Sub-modular]
$\mathscr{P}$ is called sub-modular if for all $h,~\tilde{h}\in
\Lambda$ it satisfies the following inequality:
\[\mathscr{P}(h\vee\tilde{h})+\mathscr{P}(h\wedge\tilde{h})\leq\mathscr{P}(h)+\mathscr{P}(\tilde{h})
\]
where $\vee$ and $\wedge$ are the abstract lattice operations.
\end{Definition}
For example Percival's Lagrangian $\mathscr{P}_\omega$ is lower
semi-continuous and sub-modular (see Lemma~\ref{lemma:Fundamental
Inequality}).

\section{Existence of minimizers and their properties}

In this section, we construct minimizers of $\mathscr{P}_\omega$ in
\eqref{Percivalian}  and show that they are solutions of
Euler-Lagrange equation \eqref{EulerLagrangePercival}.

We present two different functional approaches. One is based on $Y$,
the space of monotone functions, and another one is based on $Y^*_1$
the space of measurable functions and we will show that they
coincide. Later, in Section~\ref{minimizers-ground_states} we will
show that the configurations  generated by $h$ according to
\eqref{hullform} are indeed ground states.

\subsection{A treatment of minimizers based on compactness}

\newtheorem{Theorem}{Theorem}
\begin{Theorem}\label{Theorem:Basic}
Given$~\omega\in\mathbb{R}^d$, the Percival Lagrangian
$\mathscr{P}_\omega$ reaches a minimum  in Y.

Any minimizer satisfies the Euler-Lagrange equation
\eqref{EulerLagrangePercival}.
\end{Theorem}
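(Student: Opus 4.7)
The plan is to establish existence by direct methods (compactness of $Y/\!\!\sim$ plus lower semi-continuity of $\mathscr{P}_\omega$) and then derive the Euler--Lagrange equation by identifying the minimizer in $Y$ with a minimizer over the larger measurable space $Y^*_1$, where arbitrary smooth variations become admissible.

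For existence, I take a minimizing sequence $\{h_n\}\subset Y$ and normalize via symmetry (a) so that $h_n(0)\in[0,1]$. Combined with $h_n(\theta+1)=h_n(\theta)+1$ and monotonicity, this bounds $\{h_n\}$ uniformly on every compact interval. The set $\{h\in Y:h(0)\in[0,1]\}$ is a closed subset of the compact quotient $Y/\!\!\sim$ (compact by the Banach--Alaoglu identification with probability measures on the circle times $\mathbb{R}$, as recalled in Remark~\ref{Definitionofa}), so some subsequence $h_{n_k}$ converges in the graph topology to a limit $h\in Y$. Graph convergence implies $h_{n_k}(\theta)\to h(\theta)$ at every continuity point of $h$; since the discontinuity set $D$ of $h$ is countable, the exceptional set $D\cup(D-\omega_j)$ is Lebesgue-null. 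Continuity of $H_j$, together with the uniform bound on $\{h_{n_k}\}$, then lets me apply dominated convergence (or Fatou with (H3) as the lower bound) to the integrand $H_j(h_{n_k}(\theta),h_{n_k}(\theta+\omega_j))$, giving $\mathscr{P}_\omega(h)\leq\liminf\mathscr{P}_\omega(h_{n_k})$. Hence $h$ attains the infimum.

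For the Euler--Lagrange equation, the monotonicity constraint in $Y$ forbids generic smooth variations, so I upgrade minimality to $Y^*_1$. Once $h$ is known to minimize on $Y^*_1$, perturbations $h+\varepsilon\eta$ for any smooth periodic $\eta$ become legal, and the first-variation computation in Subsection~\ref{heuristic} produces
\[
\int_0^1 \Big[\sum_j \partial_1 H_j(h(\theta),h(\theta+\omega_j))+\partial_2 H_j(h(\theta-\omega_j),h(\theta))\Big]\eta(\theta)\,d\theta = 0
\]
for every such $\eta$. Density of $C^\infty$ test functions gives the bracketed quantity vanishing a.e., and continuity of the expression (coming from $H_j\in C^2$ and boundedness of $h$) upgrades this to the pointwise identity \eqref{EulerLagrangePercival}.

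The passage from $Y$ to $Y^*_1$ is where the twist condition (H2) enters essentially and is the main technical obstacle. My plan is a monotone-rearrangement argument: for any $\tilde h\in Y^*_1$, its non-decreasing rearrangement $\tilde h^*$ lies in $Y$ and satisfies $\mathscr{P}_\omega(\tilde h^*)\leq \mathscr{P}_\omega(\tilde h)$, a continuous Hardy--Littlewood-type inequality whose validity is a consequence of the sub-modularity of $\mathscr{P}_\omega$ (equivalent to $\partial_1\partial_2 H_j\leq 0$) iterated along a sorting process on level sets. Combined with the minimality of $h$ in $Y$, this yields $\mathscr{P}_\omega(h)\leq \mathscr{P}_\omega(\tilde h^*)\leq \mathscr{P}_\omega(\tilde h)$ for every $\tilde h\in Y^*_1$. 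A possible alternative, more ad hoc route would be to use the strict inequality in (H2) to rule out flat parts of the minimizer so that small smooth variations automatically preserve monotonicity within $Y$; but I expect the rearrangement/sub-modularity path to fit more cleanly with the lattice framework established earlier in the paper.
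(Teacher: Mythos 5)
Your existence argument is sound and close in spirit to the paper's: the paper also minimizes over the compact set $\{0\le h(\theta)\le 2\}$ and reduces to all of $Y$ by the symmetries (a),(b); the only difference is that the paper proves full continuity of $\mathscr{P}_\omega$ for the graph distance by an explicit $\epsilon$--$\delta$ estimate on the measure of the set where $|h-\tilde h|$ is large, whereas you use the softer route of pointwise convergence off the countable discontinuity set plus dominated convergence. Either works. Your remark that $X(h)=0$ a.e.\ upgrades to everywhere is also salvageable: $h$ is left-continuous, hence so is $X(h)$, and a left-continuous function vanishing a.e.\ vanishes identically.

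The genuine gap is in your passage from $Y$ to $Y^*_1$. You assert a Hardy--Littlewood-type inequality $\mathscr{P}_\omega(\tilde h^*)\le\mathscr{P}_\omega(\tilde h)$ for the monotone rearrangement $\tilde h^*$, to be obtained by ``iterating sub-modularity along a sorting process on level sets.'' Sub-modularity (Lemma~\ref{lemma:Fundamental Inequality}) compares the pair $\{h,\tilde h\}$ with $\{h\wedge\tilde h,\,h\vee\tilde h\}$; applied to $h$ and its translates $h\circ T_a$, the lattice operations produce functions whose value distribution differs from that of $h$, so they are not rearrangements of $h$, and no iteration of such operations converges to the monotone rearrangement. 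This is exactly why the paper's Lemma~\ref{lemma:Last} takes a different route: it minimizes $\mathscr{P}_\omega$ over the complete lattice generated by $\{\tilde h\circ T_a\}$, shows the resulting $\tilde h_\infty(\theta)=\inf_a h(\theta+a)$ is order-preserving a.e.\ (adapting Mather's Lemma 7.2), and obtains \emph{some} monotone competitor with no larger energy --- not the rearrangement of $\tilde h$. Moreover, the paper's proof of Theorem~\ref{Theorem:Basic} itself never leaves $Y$: it derives the Euler--Lagrange equation by deformations $h_s=u_s\circ h$ obtained from a flow $\dot u_s=\rho\circ\pi\circ u_s$ on the target circle, which preserve monotonicity; at a jump interval $(\alpha,\beta)=h^{-1}h(\theta_0)$ it uses one-sided variations $\psi_s$ ($s\ge0$) and $\xi_s$ ($s\le0$) together with the fact that (H2) makes $X(h)$ decreasing on $(\alpha,\beta)$ to squeeze $X(h)=0$. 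Your fallback of ruling out flat parts cannot work either, since discontinuous minimizers (equivalently, flat parts of $h^{-1}$) are precisely the interesting case in Aubry--Mather theory. To repair your architecture, replace the rearrangement claim by the lattice-of-translates argument of Lemma~\ref{lemma:Last}, or adopt the paper's in-$Y$ deformation lemma directly.
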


\begin{proof}
From (H1), the definition of $\mathscr{P}_\omega(h)$, and
$h(\theta+1)=h(\theta)+1$, it follows that $\mathscr{P}_\omega(h)$
is translation invariant (a), (b).
%\begin{equation}\label{translations}
%\begin{split}
%& \mathscr{P}_\omega(h \circ T_a)=\mathscr{P}_\omega(h)\\
%& \mathscr{P}_\omega( h + 1) = \mathscr{P}_\omega( h)
%\end{split}
%\end{equation}

To prove the existence of the minimizer, it suffices to prove the
continuity of $\mathscr{P}_{\omega}$ on $Y$. If it is true, we can
obtain a minimal point on the compact subset $\mathcal{C} = \{ 0 \le
h(\theta) \le 2 \}$.

This minimizer will also be a minimizer in $Y$ because, given $h \in
Y$, we can find $a \in \mathbb{R}, n \in \mathbb{Z}$ such that  $h
\circ T_a + n\in \mathcal{C}$ and
using (a),(b), %\eqref{translations},
$ \mathscr{P}_\omega( h \circ T_a + n) = \mathscr{P}_\omega(h)$.

In fact, let
  \[M=\max\{1,\max_{j}\sup_{|x-x'|\leq 2}{|\partial_{1}H_{j}(x,x')|},\max_{j}\sup_{|x-x'|\leq 2}{|\partial_{2}H_{j}(x,x')|}\}.\]
Since $H_{j}(u+1,v+1)=H_{j}(u,v)~~\forall~ u,v\in\mathbb{R}$, we
will get $\partial_{1}H_{j}(u+1,v+1)=\partial_{1}
H_{j}(u,v)~,~\partial_{2}H_{j}(u+1,v+1)=\partial_{2} H_{j}(u,v)$. It
follows that $M\leq\infty$. From the definition of
$\mathscr{P}_\omega$ and the mean value theorem, it follows that
\begin{equation}\label{estimate}
|\mathscr{P}_\omega(h)-\mathscr{P}_\omega(\tilde{h})|
               \leq\int_0^1[dM|h(\theta)-\tilde{h}(\theta)|+M\sum_{j=1}^d|h(\theta+\omega_j)-\tilde{h}(\theta+\omega_j)|]d\theta.
\end{equation}

Let $0<\epsilon\leq 1$. Let
$\delta=\delta(\epsilon)=\frac{\epsilon^2}{1000(dM)^2}$. Suppose
$d(h,\tilde{h})<\delta<\frac{1}{1000}$, i.e. for any
$\theta\in\mathbb{R}$, there exists $(\tilde{\theta},\tilde{y})\in
\text{graph}(\tilde{h})$ such that
\[
|(\theta,h(\theta))-(\tilde{\theta},\tilde{y})|<\delta,
\] which implies
\[
|h(\theta)-\tilde{h}(\theta)|<1+\delta<2.
\]

Suppose $a\in \mathbb{R}$. Let
$\pi_a=\{~\theta\in(a,a+1)~|~|h(\theta)-\tilde{h}(\theta)|\geq
\frac{\epsilon}{5dM}~\}$. From the assumption that
$d(h,\tilde{h})<\delta$, i.e. for any $\theta\in\mathbb{R}$, there
exists $(\tilde{\theta},\tilde{y})\in \text{graph}(h)$ such that
\begin{equation*}
\left\{\!\!\!\!
\begin{array}{rl}
& |\theta-\tilde{\theta}|<\delta\\
& |\tilde{h}-\tilde{y}|<\delta
\end{array} \right.
\end{equation*}
we obtain
\begin{equation}\label{estimate1}
h(\theta+\delta)\geq \tilde{h}(\theta)-\delta\geq
h(\theta)+\frac{\epsilon}{5dM}-\delta\geq
h(\theta)+\frac{199\epsilon}{1000dM}
\end{equation}
in the case $\tilde{h}(\theta)\geq h(\theta)+\frac{\epsilon}{5dM}$
and we obtain similarly
\begin{equation}\label{estimate2}
h(\theta-\delta)\leq \tilde{h}(\theta)-\delta\geq
h(\theta)-\frac{\epsilon}{5dM}+\delta\leq
h(\theta)-\frac{199\epsilon}{1000dM}
\end{equation}
in the case $\tilde{h}(\theta)\leq h(\theta)-\frac{\epsilon}{5dM}$.

Let $\pi_a^\prime$ (resp. $\pi_a^{\prime\prime}$) denote the set of
$\theta\in (a,a+1)$ where \eqref{estimate1} (resp.
\eqref{estimate2}) holds. Then
\[
\pi_a\subseteq \pi_a^\prime\cup\pi_a^{\prime\prime}.
\]

At any point $\theta\in\pi_a^\prime$ the variation of $h$ over the
interval $[\theta,\theta+\delta]$ is $\geq
\frac{\epsilon}{5dM}-\delta$. Since the total variation of $h$ over
$(a,a+1)$ is $\leq1$, it follows that $\pi_a^\prime$ can be covered
by at most $[\frac{1000dM}{199\epsilon}]+1\leq 7\frac{dM}{\epsilon}$
intervals of length $\delta$. Hence the measure of $\pi_a^\prime$ is
at most $\frac{7dM\delta}{\epsilon}\leq\frac{\epsilon}{100dM}$.
Similarly, the measure of $\pi_a^{\prime\prime}$ is $\leq
\frac{\epsilon}{100dM}$. Hence the measure of $\pi_a$ is $\leq
\frac{\epsilon}{50dM}$.

Since $|h(\theta)-\tilde{h}(\theta)|\leq 2$ for all
$\theta\in\mathbb{R}$ and
$|h(\theta)-\tilde{h}(\theta)|\leq\frac{\epsilon}{5dM}$ for
$\theta\in (0,1)-\pi_0$ and for
$\theta\in(\omega_j,\omega_j)-\pi_{\omega_j}$, we obtain from
\eqref{estimate} that
\[
|\mathscr{P}_\omega(h)-\mathscr{P}_\omega(\tilde{h})|\leq
dM\cdot(\frac{4\epsilon}{50dM}+\frac{\epsilon}{5dM})<\epsilon.
\]
%, by usingMather's arguments, we can easily obtain
%$|\mathscr{P}_\omega(h)-\mathscr{P}_\omega(\tilde{h})|<\epsilon$ when $d(h,\tilde{h})<\delta$.
This completes the proof of the existence of the minimizer.

Now we go into the proof that the minimizer satisfies the
Euler-Lagrange equation. The proof below is similar to
\cite{Mather'82}. The key point in the proof is that given a
minimizer we can find enough deformations that do not leave the
space so that we can conclude that the Euler-Lagrange equations
hold. These arguments are sometimes called in the calculus of
variations \emph{deformation lemmas}, a name which is used with
another meaning in other fields.

\begin{Lemma}
Suppose $a\leq0\leq b$ and $a<b$. Suppose an element $h_s$ of $Y$ is
given for $a\leq s\leq b,~h_s(\theta)$ is $C^2$ function of s for
each fixed $\theta$, and $\frac{\partial }{\partial
s}h_s(\theta),~~\frac{\partial^2 }{\partial s^2}h_s(\theta)$ are
uniformly bounded and measurable for $a\leq s\leq
b,~\theta\in\mathbb{R}$. Then
\begin{equation}\label{Variation}
\frac{d}{ds}\mathscr{P}_\omega(h_{s})|_{s=0}=\int_0^1X(h)\cdot\dot
h(\theta)d\theta,
\end{equation}
where $\dot h_s(\theta)=\frac{\partial }{\partial
s}h_s(\theta),~\dot h(\theta)=\dot h_0(\theta)$ and $h=h_0$.
\end{Lemma}
%In order to conclude $X(h)\equiv0$ for any $\theta\in\mathbb{R}$, we
%want to show that we can obtain deformations such that $\dot
%h(\theta)$ equals a nonzero constant in a small neighborhood of
%$\theta$ for any $\theta\in\mathbb{T}^1$.
Clearly, if $h$ is a minimizer and $h_s$ is a deformation, $h_0=h$,
we have $\frac{d}{ds}\mathscr{P}_\omega(h_s)|_{s=0}=0$. Using
\eqref{Variation} we obtain that $\int_0^1 X(h)\cdot\dot h=0$. To
conclude that $X(h)$ is identically zero, we have to argue that we
can obtain enough deformations $\dot h(\theta)$ that force that
$X(h)$ is zero in the neighborhood of any point
$\theta\in\mathbb{T}^1$. We will generate deformations by solving
the ordinary differential equation:
\begin{equation*}
\left\{\!\!\!\!
  \begin{array}{rl}
   &\frac{d}{ds}u_s(\theta)=\rho\circ\pi\circ u_s(\theta)\\
   &u_0=id,
  \end{array}
\right.
\end{equation*}
where $\pi:\mathbb{R}\rightarrow\mathbb{R}/\mathbb{Z}$ is the
projection map and $\rho$ which has values in $[0,1]$ will be
decided later. We will consider every continuous point $\theta_0$ of
$h$ first and then take the limit to approximate the discontinuous
ones due to the fact that $h$ is monotone. We simplify the formula
in the above lemma in the two cases below:
\begin{enumerate}
\item When $h^{-1}\circ h(\theta_0)$ is a single point,
we define $h_s=u_s\circ h$ and get
\[
\frac{d}{ds}\mathscr{P}_\omega(h_{s})|_{s=0}=\int_0^1X(h)\cdot\rho\circ\pi\circ
h ~d\theta.\]
\item When $h^{-1}\circ h(\theta_0)$ is an interval, there exists
$\theta_1>\theta_0$ such that
\begin{itemize}
\item we define
\begin{equation*}
\psi_s(\theta)=\left\{
\begin{array}{rl}
u_s\circ h(\theta) & \text{if } \exists~n\in\mathbb{Z}\text{ such
that
}\theta_0+n<\theta\leq\theta_1+n\\
h(\theta)          & \text{otherwise }
\end{array} \right.
\end{equation*}
and get
$\frac{d}{ds}\mathscr{P}_\omega(\psi_{s})|_{s=0}=\int_{\theta_0}^{\theta_1}X(h)\cdot\rho\circ\pi\circ
h ~d\theta$.
\item
\begin{equation*}
\xi_s(\theta)=\left\{
\begin{array}{rl}
h(\theta)   & \text{if } \exists~n\in\mathbb{Z}\text{ such that
}\theta_0+n<\theta\leq\theta_1+n\\
u_s\circ h(\theta)        & \text{otherwise }
\end{array} \right.
\end{equation*}
and get
$\frac{d}{ds}\mathscr{P}_\omega(\xi_{s})|_{s=0}=\int_{\theta_1-1}^{\theta_0}X(h)\cdot\rho\circ\pi\circ
h~ d\theta$.
\end{itemize}
\end{enumerate}

For case (1), provided that $\rho$ has support in a sufficiently
small neighborhood of $\pi\circ h(\theta_0)$, we have $h_s\in Y$ for
$s$ sufficiently small.

The hypothesis that $\mathscr{P}_\omega$ takes its minimum at
$h=h_0$ implies $\frac{d}{ds}\mathscr{P}_\omega(h_{s})|_{s=0}=0$.
Since $X(h)$ is continuous at $\theta=\theta_0$, and
$\theta_0=h^{-1}h(\theta)$, the fact that $\int_0^1
X(h)\cdot\rho\circ\pi\circ h ~d\theta=0$ for all $\rho$ of the type
we consider, implies $X^{\theta_0}(h)=0$. (Here we write explicitly
the dependence of $X(h)$ on the point $\theta_0$ we choose.)

For case (2), let $\alpha$ and $\beta$ be endpoints of
$h^{-1}h(\theta_0)$ with $\alpha<\beta$. $X(h)$ is a decreasing
function of $\theta\in(\alpha,\beta)$ by (H2). It is easy to see
that if $\rho$ has support in a sufficiently small neighborhood of
$\pi\circ h(\theta)$, then $\psi_s\in Y$ for $s\geq0$ sufficiently
small and $\xi_s\in Y$ for $s\leq0$ sufficiently small. The
assumption that $\mathscr{P}_\omega$ takes its minimum at
$h=\psi_0=\xi_0$ implies
$\frac{d}{ds}\mathscr{P}_\omega(\psi_{s})|_{s=0}\geq0$ and
$\frac{d}{ds}\mathscr{P}_\omega(\xi_{s})|_{s=0}\leq0$. In view of
the fact that $X(h)$ is a decreasing function on $(\alpha,\beta)$,
we have $X(h)\geq0$ and $X(h)\leq0$. Hence $X^{\theta_0}(h)=0$. This
completes the proof of the second part of Theorem
\ref{Theorem:Basic}.
\end{proof}

\subsection{Existence of minimizers based on order properties}

In the following we present another approach to the same problem
based on different spaces. Basically, we show that
$\mathscr{P}_\omega$ reaches a minimum  on $Y^*$.

\begin{Theorem}\label{Th:Extended}
Under our standing assumptions, there is a minimizer of
$\mathscr{P}_\omega$ over $Y^*$. Any minimizer on $Y^*$ satisfies
the Euler-Lagrange equations.

There is one minimizer which lies on $Y$.
\end{Theorem}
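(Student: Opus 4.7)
The plan is to exploit the absence of a monotonicity constraint in $Y^*$ to derive the Euler--Lagrange equations directly, and to combine sub-modularity with Theorem~\ref{Theorem:Basic} to obtain both existence and a monotone minimizer.

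For the Euler--Lagrange equations, given a minimizer $h\in Y^*$, every bounded periodic measurable $\eta$ (with $\eta(\theta+1)=\eta(\theta)$) yields an admissible variation, since $h+\varepsilon\eta\in Y^*$ for every $\varepsilon\in\mathbb{R}$. The heuristic computation of Section~\ref{heuristic} is then rigorous under (H1)--(H2) and gives $\int_0^1 X(h)\eta\,d\theta=0$. Varying $\eta$ over a dense family in $L^1$, the fundamental lemma of the calculus of variations yields $X(h)=0$ almost everywhere, which is exactly \eqref{EulerLagrangePercival}.

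For the monotone minimizer, I would show that any $Y^*$-minimizer $h$ is automatically monotone. By symmetry~(b), each translate $h_a=h\circ T_a$ is also a $Y^*$-minimizer. Applying sub-modularity of $\mathscr{P}_\omega$ (see Lemma~\ref{lemma:Fundamental Inequality}) to $h$ and $h_a$ and using minimality at both forces equality in the sub-modular inequality. The strict twist (H2) makes the inequality strict whenever the product-order comparison $(h(\theta)-h_a(\theta))(h(\theta+\omega_j)-h_a(\theta+\omega_j))<0$ holds on a set of positive measure for some $j$. Thus $h$ and $h_a$ do not cross, and either $h_a\geq h$ or $h_a\leq h$ pointwise for every $a\in\mathbb{R}$. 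The set $S=\{a:h_a\geq h\}$ is then a closed additive sub-semigroup of $\mathbb{R}$ with $0,1\in S$ (the latter because $h_1=h+1>h$), $S\cup(-S)=\mathbb{R}$, and $S\cap(-S)=\{0\}$; these constraints force $S=[0,\infty)$, so $h$ is non-decreasing and lies in $Y$.

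To finish, Theorem~\ref{Theorem:Basic} provides $h^\star\in Y\subset Y^*$ with $\mathscr{P}_\omega(h^\star)=\min_Y\mathscr{P}_\omega$, giving $\inf_{Y^*}\mathscr{P}_\omega\leq\mathscr{P}_\omega(h^\star)$. For the reverse inequality, I would take a minimizing sequence in $Y^*$, normalize via symmetries~(a),(b), and extract a pointwise limit $h^\dagger$ using Tikhonov compactness of bounded subsets of $Y^*$ together with lower semicontinuity of $\mathscr{P}_\omega$; by the previous paragraph $h^\dagger\in Y$, and minimality of $h^\star$ in $Y$ forces $\inf_{Y^*}\mathscr{P}_\omega=\mathscr{P}_\omega(h^\dagger)\geq\mathscr{P}_\omega(h^\star)$, so $h^\star$ achieves $\inf_{Y^*}\mathscr{P}_\omega$. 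The main obstacle is the $L^\infty$ bound needed for this compactness: in $Y$, monotonicity together with $h(\theta+1)=h(\theta)+1$ automatically bounds the oscillation by $1$, whereas in $Y^*$ no such bound is free. A natural remedy is the sub-modular truncation $h\mapsto (h\vee(h^\star-C))\wedge(h^\star+C)$, which does not increase $\mathscr{P}_\omega$ (by two applications of sub-modularity paired against translates of $h^\star$) and produces a uniformly bounded minimizing sequence.
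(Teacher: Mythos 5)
Your Euler--Lagrange paragraph is fine and is exactly what the paper has in mind (free variations $h+\varepsilon\eta$ in $Y^*$ make the deformation lemma trivial). The rest of the argument, however, has a genuine gap. The central claim that \emph{every} $Y^*$-minimizer is automatically monotone is false for resonant $\omega$, which the theorem does not exclude. Take $d=1$, $\omega=0$: then $\mathscr{P}_0(h)=\int_0^1 H(h(\theta),h(\theta))\,d\theta$, and any measurable $h$ taking values a.e.\ in $\operatorname{argmin}_u H(u,u)$ (with $h(\theta+1)=h(\theta)+1$) is a minimizer; if that argmin contains two points mod $1$, one easily builds non-monotone minimizers. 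Your non-crossing step breaks precisely here: equality in the sub-modular inequality for the pair $h, h_a$ only forces the \emph{orbit-wise} condition $(h(\theta)-h_a(\theta))(h(\theta+\omega_j)-h_a(\theta+\omega_j))\geq 0$ for a.e.\ $\theta$ and each $j$, not global comparability of $h$ and $h_a$; upgrading one to the other requires density/ergodicity of the $\omega_j$-orbits (and even then one must handle the set where $h=h_a$, since $\{h>h_a\}$ is only mapped into $\{h\geq h_a\}$). In addition, your compactness remedy is circular: to conclude that the truncation $(h\vee(h^\star-C))\wedge(h^\star+C)$ does not increase $\mathscr{P}_\omega$, the sub-modular inequality must be paired with the bound $\mathscr{P}_\omega(h\wedge(h^\star+C))\geq\mathscr{P}_\omega(h^\star)$, i.e.\ with the statement that $h^\star$ already minimizes over $Y^*$ --- which is what you are trying to prove. (Also note that $\mathscr{P}_\omega(h^\star\pm C)=\mathscr{P}_\omega(h^\star)$ only for $C\in\mathbb{Z}$, since symmetry (a) is addition of integers.)

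The paper avoids both problems by never claiming that a given minimizer is monotone. Existence over $Y^*$ comes from the abstract lattice result (Lemma~\ref{lemma:Basic}: sub-modular, lower semi-continuous, bounded below on a complete lattice implies a minimum, obtained by taking iterated meets of a minimizing sequence --- this is also how the $L^\infty$-bound issue is sidestepped, since meets of translates of a fixed locally bounded $\tilde h$ stay uniformly controlled). The monotone representative comes from Lemma~\ref{lemma:Last}: given any $\tilde h\in Y^*$, minimize over the complete lattice generated by $\{\tilde h\circ T_a\}$ and then pass to $\tilde h_\infty=\inf\{h\circ T_a: a\in\mathbb{Q}^+\}$, which is essentially order-preserving \emph{by construction} and, by sub-modularity plus translation invariance plus lower semi-continuity, has no larger energy. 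This yields $\inf_{Y^*}\mathscr{P}_\omega=\inf_Y\mathscr{P}_\omega$ and hence a minimizer in $Y$, without ever asserting that all $Y^*$-minimizers are monotone. If you want to salvage your outline, replace ``every minimizer is monotone'' by this replacement construction.
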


Of course, once we prove that there is one minimizer in the whole
space $Y^*$ which actually lies in $Y$ we conclude that $\inf_{h \in
Y^*} \mathscr{P}_\omega(h)   = \inf_{h \in Y} \mathscr{P}_\omega(h)$
and, therefore that all the minima in $Y$ are also minima in $Y^*$.

The main advantage of this argument is that, since $Y^*$ does not
involve any constraints, the deformation lemmas are almost trivial
and, therefore it is easy to show that the minimizers satisfy the
Euler-Lagrange equations \eqref{EulerLagrangePercival}.

%\marginpar{
%I think that it is much easier to use the argument in
%Caffarelli Llave.

%You show that a minimizer exists. The submodular
%lemma tells that infimum of two minimizers exists.
%So, you can take the infimum over all minimizers.
%Any right translation is a minimizer, hence it is bigger
%than the infimal minimizer.

%So, it is monotone.
%Then, we avoid all the material on the lattice}

\subsection{Proof of Theorem~\ref{Th:Extended}}

%We start by remarking that in a bounded set, the functional is
%continuous in the indicated topology by Lebesgue dominated
%convergence theorem.

We use  the following basic lemma in \cite{Forni'96}:
\begin{Lemma} \label{lemma:Basic}
Let $\mathscr{P}$ be a real-valued function on a complete lattice
$\Lambda$. Suppose $\mathscr{P}$ is sub-modular, lower
semi-continuous and bounded from below. Then $\mathscr{P}$ has a
minimum on $\Lambda$.
\end{Lemma}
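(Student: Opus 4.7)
The plan is to turn a well-chosen minimizing sequence into monotone ones whose order-limits exist by completeness, and then read off the minimum by two applications of lower semi-continuity. Set $a = \inf_{\Lambda} \mathscr{P}$, which is finite by the lower bound assumption, and pick $\{h_k\} \subset \Lambda$ with $\mathscr{P}(h_k) \leq a + 2^{-k}$.

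The algebraic core is that sub-modularity promotes approximate minimizers, under $\vee$, to approximate minimizers with controlled loss: from
\[
\mathscr{P}(h \vee h') \leq \mathscr{P}(h) + \mathscr{P}(h') - \mathscr{P}(h \wedge h') \leq \mathscr{P}(h) + \mathscr{P}(h') - a,
\]
a straightforward induction on $N$ gives
\[
\mathscr{P}\Bigl(\bigvee_{k=n}^{N} h_k\Bigr) \leq \sum_{k=n}^{N} \mathscr{P}(h_k) - (N-n)\,a \leq a + 2^{1-n},
\]
uniformly in $N \geq n$.

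I would then pass to the order-limit twice. The partial joins $\bigvee_{k=n}^{N} h_k$ are increasing in $N$ and, by completeness, order-converge to $\bar h_n \equiv \bigvee_{k \geq n} h_k \in \Lambda$; lower semi-continuity yields $\mathscr{P}(\bar h_n) \leq a + 2^{1-n}$. The sequence $\{\bar h_n\}$ is in turn decreasing in $n$ and order-converges to $\bar h \equiv \liminf_k h_k \in \Lambda$, and a second application of lower semi-continuity gives $\mathscr{P}(\bar h) \leq \liminf_n \mathscr{P}(\bar h_n) \leq a$, so $\bar h$ attains the infimum.

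The main point to handle carefully is justifying that the two monotone sequences really order-converge in the paper's sense ($\liminf = \limsup$); for monotone sequences bounded in a complete lattice this is automatic, so the argument reduces to the inductive bound above plus two clean uses of lower semi-continuity.
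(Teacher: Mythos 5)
Your argument is correct and essentially identical to the paper's, which runs the dual computation with meets in place of joins: it bounds $\mathscr{P}(h_j \wedge h_{j+1}\wedge\cdots\wedge h_{j+k})$ by sub-modularity together with the lower bound $\mathscr{P}(h_j\vee h_{j+1})\geq\beta$, then takes the same two monotone order-limits and applies lower semi-continuity twice. The only slip is notational: since you use joins, the final object $\inf_n \bigvee_{k\geq n} h_k$ is $\limsup_k h_k$ rather than $\liminf_k h_k$, which changes nothing in the argument.
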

\begin{proof}
Since $\mathscr{P}$ is bounded from below,
$\beta=\inf_\Lambda\mathscr{P}$ is a real number. Suppose given any
sequence of positive real numbers $(\epsilon_j)_{j\in\mathbb{N}}$
converging to zero, there exists a sequence
$(h_j)_{j\in\mathbb{N}}\subseteq\Lambda$ such that:
\[\beta\leq\mathscr{P}(h_j)\leq\beta+\epsilon_j
\]
By the sub-modularity, since $\mathscr{P}(h_j\vee
h_{j+1})\geq\beta$, we have
\[\mathscr{P}(h_j\wedge
h_{j+1})\leq\beta+\epsilon_j+\epsilon_{j+1}
\]
and by induction, we have
\[\mathscr{P}(h_j\wedge
h_{j+1}\wedge\cdots\wedge
h_{j+k})\leq\beta+\epsilon_j+\epsilon_{j+1}+\cdots+\epsilon_{j+k}
\]
Define $\tilde{h}_{j,k}\equiv h_j\wedge h_{j+1}\wedge\cdots\wedge
h_{j+k}$ for all $j,k\geq1$. By construction, it is a non-increasing
sequence included in $\Lambda$ with respect to $k$. Due to the
completeness of $\Lambda$, we can define
$\tilde{h}_j=\lim_{k\rightarrow\infty}\tilde{h}_{j,k}\in\Lambda$.
Consequently, one gets:
\begin{equation}\label{above inequality}
\mathscr{P}(\tilde{h}_j)=\mathscr{P}(\lim_{k\rightarrow\infty}\tilde{h}_{j,k})\leq\liminf_{k\rightarrow\infty}\mathscr{P}(\tilde{h}_{j,k})\leq\beta+r_j
\end{equation}
by the above inequality and the lower semi-continuity of
$\mathscr{P}$, where $r_j=\sum_{k\geq j}\epsilon_k$ is finite and
converges to zero, as $j\rightarrow\infty$ if we choose
$(\epsilon_j)_{j\in\mathbb{N}}$ such that
\[\sum_{j\geq0}\epsilon_j<\infty.
\]
On the other hand, since $\tilde{h}_{j,k}\leq\tilde{h}_{j+1,k-1}$
for all $j,k\geq1$, then $(\tilde{h}_j)_{j\in\mathbb{N}}$ is a
non-decreasing sequence, hence it has a limit $\tilde{h}\in\Lambda$.
By the lower semi-continuity and the choice of the sequence
$(\epsilon_j)_{j\in\mathbb{N}}$, \eqref{above inequality} implies:
\[\mathscr{P}(\tilde{h})\leq
\beta+\liminf_{j\rightarrow\infty}r_j=\beta
\]
which concludes the proof, by showing that $\tilde{h}$ is a minimum
point for $\mathscr{P}.\qedhere$
\end{proof}

Now we turn to show how the concrete functional $\mathscr{P}_\omega$
in \eqref{Percivalian} satisfies the assumptions of the abstract
results.

\begin{Lemma}[Fundamental Inequality in Aubry-Mather theory]\label{lemma:Fundamental Inequality}
If $h,~\tilde{h}\in Y_n^*$, then
\begin{equation}\label{fundamentaleq}
\mathscr{P}_\omega(h\vee\tilde{h})+\mathscr{P}_\omega(h\wedge\tilde{h})\leq\mathscr{P}_\omega(h)+\mathscr{P}_\omega(\tilde{h})
\end{equation}
\end{Lemma}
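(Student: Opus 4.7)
The plan is to reduce the inequality to a pointwise statement on the integrand of each $j$-summand in $\mathscr{P}_\omega$ and then exploit the negative twist assumption (H2). Fix $j\in\{1,\ldots,d\}$ and $\theta\in\mathbb{R}$, and set
\[
u_1=h(\theta),\qquad u_2=\tilde h(\theta),\qquad v_1=h(\theta+\omega_j),\qquad v_2=\tilde h(\theta+\omega_j).
\]
By the definition of $\vee$ and $\wedge$, $(h\vee\tilde h)(\theta)=\max(u_1,u_2)$, $(h\wedge\tilde h)(\theta)=\min(u_1,u_2)$, and analogously at $\theta+\omega_j$, so it will suffice to prove the pointwise submodularity
\[
H_j\bigl(\max(u_1,u_2),\max(v_1,v_2)\bigr)+H_j\bigl(\min(u_1,u_2),\min(v_1,v_2)\bigr)\le H_j(u_1,v_1)+H_j(u_2,v_2),
\]
then integrate over $\theta\in[0,1]$ and sum in $j$.

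Next I would split into cases according to the relative order of $(u_1,u_2)$ versus $(v_1,v_2)$. In the comonotone cases, $u_1\ge u_2$ together with $v_1\ge v_2$ (or both inequalities reversed), the pairs appearing on the left are just $\{h,\tilde h\}$ rearranged, so equality holds trivially. The only non-trivial situation is the crossing case; by the $h\leftrightarrow\tilde h$ symmetry we may assume $u_1\ge u_2$ while $v_1\le v_2$. The inequality to prove then reduces to
\[
H_j(u_1,v_2)+H_j(u_2,v_1)\le H_j(u_1,v_1)+H_j(u_2,v_2),
\]
and after rearranging and applying the fundamental theorem of calculus in both arguments, this is equivalent to
\[
\int_{u_2}^{u_1}\!\int_{v_1}^{v_2}\partial_1\partial_2 H_j(u,v)\,dv\,du\le 0.
\]
Since $u_1\ge u_2$ and $v_2\ge v_1$, this follows immediately from (H2), which gives the pointwise bound $\partial_1\partial_2 H_j\le c<0$ and, in fact, even the quantitative estimate that the left-hand side is at most $c\,(u_1-u_2)(v_2-v_1)$.

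Finally, because $h,\tilde h\in L^\infty_{\mathrm{loc}}$, the functions $h\vee\tilde h$ and $h\wedge\tilde h$ are measurable and locally bounded, so all four integrands entering \eqref{fundamentaleq} are integrable on $[0,1]$; integrating the pointwise inequality and summing in $j$ delivers \eqref{fundamentaleq}. I do not anticipate any genuine obstacle here: the substance of the argument is simply the two-variable submodularity of $H_j$, which is a direct consequence of the negative twist condition, and the remainder is case-bookkeeping. The quantitative form of the crossing-case inequality will moreover be useful later for strict-comparison arguments in the multiplicity results.
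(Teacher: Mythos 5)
Your proof is correct and follows essentially the same route as the paper: both reduce \eqref{fundamentaleq} to the pointwise submodularity of each $H_j$, express the relevant difference as a double integral of $\partial_1\partial_2H_j$ via the fundamental theorem of calculus, and invoke (H2). If anything your explicit split into the comonotone case (where equality holds trivially) and the crossing case is slightly more careful than the paper's single displayed identity, which as literally written only holds in the crossing case (in the comonotone case the left-hand side is $0$ while the double integral is strictly negative), though the final inequality is of course unaffected.
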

\begin{proof}
Using the fundamental theorem of calculus, we have
\begin{align}
&H_j(h(\theta)\wedge \tilde{h}(\theta),h(\theta+\omega_j)\wedge
\tilde{h}(\theta+\omega_j))+H_j(h(\theta)\vee
\tilde{h}(\theta),h(\theta+\omega_j)\vee
\tilde{h}(\theta+\omega_j))\nonumber\\
&-H_j(h(\theta),h(\theta+\omega))-H_j(\tilde{h}(\theta),\tilde{h}(\theta+\omega))\nonumber\\
=&\qquad\int_{h(\theta)\wedge \tilde{h}(\theta)}^{h(\theta)\vee
\tilde{h}(\theta)}\int_{h(\theta+\omega_j)\wedge
\tilde{h}(\theta+\omega_j)}^{h(\theta+\omega_j)\vee
\tilde{h}(\theta+\omega_j)}\partial_1\partial_2H_j(x,y)dx
dy.\nonumber
\end{align}
Adding over $j$ and integrating with respect to $\theta$, we obtain
\begin{align}
&\mathscr{P}_\omega(h\vee\tilde{h})+\mathscr{P}_\omega(h\wedge\tilde{h})-\mathscr{P}_\omega(h)-\mathscr{P}_\omega(\tilde{h})\nonumber\\
=&\frac{1}{n}\sum_{j=1}^d\int_0^n d\theta\int_{h(\theta)\wedge
\tilde{h}(\theta)}^{h(\theta)\vee
\tilde{h}(\theta)}\int_{h(\theta+\omega_j)\wedge
\tilde{h}(\theta+\omega_j)}^{h(\theta+\omega_j)\vee
\tilde{h}(\theta+\omega_j)}\partial_1\partial_2H_j(x,y)dx dy\leq
0\nonumber
%\iint_{D_j(\theta)}\partial_1\partial_2H_j(x,\tilde{x})dxd\tilde{x},
\end{align}
The last inequality holds because of (H2). Hence we obtain
\eqref{fundamentaleq}.
\end{proof}

%We already know that $\mathscr{P}_\omega$ is sub-modular. As a
%direct consequence of H3, $\mathscr{P}_\omega$ is bounded from below
%and is lower semi-continuous on $Y_1^*$ by Fatou's lemma. In order
%to use Lemma \ref{lemma:Basic}, we need to find some complete
%sub-lattices of $Y_1^*$. This idea is applied in the proof of the
%following lemma.
We had defined before two spaces $Y_1^*$ (see Section
\ref{definition of Y^*}) and $X$ (see Remark \ref{Definitionofa}).
We have $X\subseteq Y_1^*$ (roughly $X$ is a subset of functions in
$Y_1^*$ with some monotonicity properties). Since $X\subseteq Y_1^*$
it is clear that
\[
\inf_{h\in Y_1^*}\mathscr{P}_\omega(h)\leq\inf_{h\in X}
\mathscr{P}_\omega(h).
\]
In Lemma \ref{lemma:Last}, we show that both minima are actually
equal.

This is useful because it is easier to show that minimizers in
$Y_1^*$ satisfy the Euler-Lagrange equation. (Since $Y_1^*$ has less
properties, it is easy to construct deformations that do not leave
the space.)

\begin{Lemma}\label{lemma:Last}
Let $\tilde{h}\in Y_1^*$. Then there exists $\overline{h}\in X$ such
that
\[\mathscr{P}_\omega(\overline{h})\leq\mathscr{P}_\omega(\tilde{h}).
\]
\end{Lemma}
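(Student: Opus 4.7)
The plan is a two-stage reduction. First, from $\tilde h\in Y_1^*$ I would produce a monotone $h\in Y$ with $\mathscr{P}_\omega(h)\leq\mathscr{P}_\omega(\tilde h)$; second, I would use the symmetries (a), (b) of $\mathscr{P}_\omega$ to normalize $h$ into $X$. The second stage is essentially Remark~\ref{Definitionofa}: since $h$ is monotone and satisfies $h(\theta+1)=h(\theta)+1$, the quantity $a:=\inf\{\theta:h(\theta)\geq 0\}$ is a finite real number, and the normalized translate $\overline h(\theta):=h(\theta+a)-n$, with $n$ the unique integer implementing the sign conditions of $X$, lies in $X$ and has the same Percival value as $h$ by the symmetries.

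For the first stage, I would define the monotone envelope
\[
h(\theta) := \inf_{\phi\geq\theta}\tilde h(\phi).
\]
The cocycle $\tilde h(\phi+1)=\tilde h(\phi)+1$ immediately yields $h(\theta+1)=h(\theta)+1$, and by construction $h$ is non-decreasing, so after left-regularization $h\in Y$. To control the Percival value, I would approximate $h$ from above by finite meets of translates: for finite $F\subset\mathbb{R}_{\geq 0}$ set $h_F:=\bigwedge_{a\in F}\tilde h\circ T_a$ and the parallel join $H_F:=\bigvee_{a\in F}\tilde h\circ T_a$. Symmetry (b) gives $\mathscr{P}_\omega(\tilde h\circ T_a)=\mathscr{P}_\omega(\tilde h)$ for every $a$, and iterated application of the Fundamental Inequality (Lemma~\ref{lemma:Fundamental Inequality}) should produce a joint bound $\mathscr{P}_\omega(h_F),\mathscr{P}_\omega(H_F)\leq\mathscr{P}_\omega(\tilde h)$. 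As $F$ increases through a countable dense subset of $\mathbb{R}_{\geq 0}$, $h_F\downarrow h$ pointwise, and the lower semi-continuity of $\mathscr{P}_\omega$ (essentially estimate~\eqref{estimate} of Theorem~\ref{Theorem:Basic}) yields $\mathscr{P}_\omega(h)\leq\mathscr{P}_\omega(\tilde h)$.

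The hard part will be the joint bound on $\mathscr{P}_\omega(h_F)$ and $\mathscr{P}_\omega(H_F)$: the Fundamental Inequality controls only the sum $\mathscr{P}_\omega(\mathrm{meet})+\mathscr{P}_\omega(\mathrm{join})$, so individual bounds cannot both be maintained by a naive induction. The resolution should exploit the strict twist (H2): whenever the sub-modular inequality fails to be strict, the meet and the join must agree with the summands almost everywhere, which keeps the induction closed. An alternative, cleaner route would be to observe that $X$ is a compact complete sub-lattice of $Y_1^*$ closed under the pointwise operations $\vee$ and $\wedge$, apply Lemma~\ref{lemma:Basic} on $X$ to produce a minimizer $\overline h\in X$, and then reduce the inequality $\mathscr{P}_\omega(\overline h)\leq\mathscr{P}_\omega(\tilde h)$ to the equality $\min_X\mathscr{P}_\omega=\inf_{Y_1^*}\mathscr{P}_\omega$, which itself follows from the monotone-envelope construction just described.
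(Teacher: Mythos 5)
You have correctly located the crux: the Fundamental Inequality (Lemma~\ref{lemma:Fundamental Inequality}) bounds only the \emph{sum} $\mathscr{P}_\omega(h\wedge\tilde h)+\mathscr{P}_\omega(h\vee\tilde h)$, so to propagate the bound $\mathscr{P}_\omega(h_F\wedge\tilde h\circ T_b)\leq\mathscr{P}_\omega(\tilde h)$ through the induction you would need the lower bound $\mathscr{P}_\omega(h_F\vee\tilde h\circ T_b)\geq\mathscr{P}_\omega(\tilde h)$, and for an arbitrary $\tilde h\in Y_1^*$ there is no reason for the join to satisfy this. Neither of your proposed repairs works. Equality in the sub-modular inequality only forces, for a.e.\ $\theta$ and each $j$, that \emph{one side} of the rectangle $\bigl[(h\wedge\tilde h)(\theta),(h\vee\tilde h)(\theta)\bigr]\times\bigl[(h\wedge\tilde h)(\theta+\omega_j),(h\vee\tilde h)(\theta+\omega_j)\bigr]$ degenerates; this does not make meet and join agree with the summands, and in any case the inequality will typically be strict for a non-monotone $\tilde h$, so there is no dichotomy to exploit. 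Your ``cleaner route'' is circular: the identity $\min_X\mathscr{P}_\omega=\inf_{Y_1^*}\mathscr{P}_\omega$ is exactly the content of this lemma (the paper says so immediately before stating it), so it cannot be an input to its proof.

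The missing idea, and the way the paper proceeds, is to \emph{first} replace $\tilde h$ by a minimizer over the complete sub-lattice $Y_A(\tilde h)\subseteq Y_1^*$ generated by $\{\tilde h\circ T_a:0\leq a\leq A\}$; such a minimizer $h$ exists by Lemma~\ref{lemma:Basic}, and after noting that the quotients $Y_A(\tilde h)/\mathbb{R}$ stabilize in $A$ one may take $h$ minimizing for all large $A$. Now the induction closes by \emph{minimality} rather than by strict twist: $h\wedge h\circ T_a$ and $h\vee h\circ T_a$ both lie in the lattice, hence have Percival value $\geq\mathscr{P}_\omega(h)$, while sub-modularity and translation invariance give that their sum is $\leq 2\mathscr{P}_\omega(h)$; so both equal $\mathscr{P}_\omega(h)$, and iterating yields $\mathscr{P}_\omega(h\wedge h\circ T_{a_1}\wedge\cdots\wedge h\circ T_{a_m})=\mathscr{P}_\omega(h)$ exactly. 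Lower semi-continuity then gives $\mathscr{P}_\omega(\tilde h_\infty)\leq\mathscr{P}_\omega(h)\leq\mathscr{P}_\omega(\tilde h)$ for the decreasing limit $\tilde h_\infty$ over an enumeration of the positive rationals. One further point you gloss over: this $\tilde h_\infty$ is a countable infimum of translates, not your pointwise envelope $\inf_{\phi\geq\theta}\tilde h(\phi)$, and it is only order-preserving \emph{almost everywhere}; the paper needs the Fubini argument of Mather's Lemma 7.2 to show $\tilde h_\infty\circ T_b\geq\tilde h_\infty$ a.e.\ for every $b>0$ before modifying $\tilde h_\infty$ on a null set to obtain $\overline h\in Y$ and translating it into $X$ as in Remark~\ref{Definitionofa}.
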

\begin{proof}
Let $Y_A(\tilde{h})\subseteq Y_1^*$ be the complete lattice
generated by the set $\{\tilde{h}\circ T_a: 0\leq a\leq A\}$, which
exists since $\tilde{h}$ is locally bounded, i.e. $Y_A(\tilde{h})$
is the smallest complete lattice that includes the set
$\{\tilde{h}\circ T_a: 0\leq a\leq A\}$. Applying Lemma
\ref{lemma:Basic}, there exists $h_A\in Y_A(\tilde{h})$  for each
$A\geq0$ which minimizes $\mathscr{P}_\omega$ over $Y_A(\tilde{h})$.
Next, we will consider the quotient set $\Lambda_A(\tilde{h})\equiv
Y_A(\tilde{h})/\mathbb{R}$, obtained by projecting the sub-lattices
$Y_A(\tilde{h})$ into the quotient space $Y_1^*/\mathbb{R}$. Since
$\tilde{h}$ is locally bounded and satisfies
$\tilde{h}(\theta+1)=\tilde{h}(\theta)+1$, the sets
$\Lambda_A(\tilde{h})$ stabilize as $A\rightarrow \infty$, i.e.,
there exists $M>0$ such that
\[
\Lambda_A(\tilde{h})=\Lambda_M(\tilde{h})~~\text{if}~A\geq M.
\]
Hence, due to the translation invariance of $\mathscr{P}_\omega$, it
is possible to choose $h\in Y_M(\tilde{h})$ for each $A\geq M$ that
minimizes $\mathscr{P}_\omega$ over $Y_M(\tilde{h})$. From the
translation invariance and sub-modularity property of
$\mathscr{P}_\omega$, we obtain
\[
\mathscr{P}_\omega(h\wedge h\circ T_a)=\mathscr{P}_\omega(h\vee
h\circ T_a)=\mathscr{P}_\omega(h)=\mathscr{P}_\omega(h\circ T_a),
\]
since $h$ minimizes $\mathscr{P}_\omega$ over $Y_A(\tilde{h})$ if
$A$ is sufficiently large. Let $\{a\}_{i\in\mathbb{N}}$ be an
enumeration of all the positive rational numbers and
$\tilde{h}_m=h\wedge h\circ T_{a_1}\wedge\ldots\wedge h\circ
T_{a_m}$. Repeating the argument above $m$ times, we get
\[\mathscr{P}_\omega(\tilde{h}_m)=\mathscr{P}_\omega(h).
\]
We have $h\geq\tilde{h}_1\geq \ldots\geq \tilde{h}_m\geq \ldots$ and
$\tilde{h}_m|_{[a,b]}\geq C=\inf\{h(s):a\leq s\leq a+1\}$ for all
$m$ and each finite interval $[a,b]$ since $h$ is locally bounded
and $h(\theta+1)=h(\theta)+1$. Consequently,
$\tilde{h}_\infty(\theta)=\lim_{m\rightarrow\infty}\tilde{h}_m(\theta)$
exists for all $\theta\in \mathbb{R}$ and by the lower
semi-continuity of $\mathscr{P}_\omega$ we get
\begin{equation}\label{above inequality2}
\mathscr{P}_\omega(\tilde{h}_\infty)\leq\lim_{m\rightarrow\infty}\mathscr{P}_\omega(\tilde{h}_m)=\mathscr{P}_\omega(h).
\end{equation}

It is sufficient to prove that $\tilde{h}_\infty$ is
order-preserving almost everywhere (i.e. that it agrees with an
order-preserving function almost everywhere). We adapt an argument
in \cite[Lemma 7.2]{Mather'85}.

We know that $\tilde{h}_\infty(\theta)=\inf \{h(\theta+a):a$ is a
positive rational number$\}$ is order-preserving except on a set of
zero measure. In fact, for a positive rational number $a$, we have
$\tilde{h}_\infty\circ T_a \geq\tilde{h}_\infty$ by the definition
of $\tilde{h}_\infty$. Since $\tilde{h}_\infty\in
L_{loc}^\infty(\mathbb{R})$, i.e. is measurable and bounded on
bounded sets, we have $\int_I|\tilde{h}_\infty \circ
T_a-\tilde{h}_\infty \circ T_b|\rightarrow0$ as $a\rightarrow b$,
for any finite interval $I$. Therefore, if $b>0$, we obtain
$\tilde{h}_\infty\circ T_b \geq\tilde{h}_\infty$ almost everywhere.
In other words, for each $b>0$,
$Leb\{\theta:~\tilde{h}_\infty(\theta)>\tilde{h}_\infty(\theta+b)\}=0$
where $Leb$ is the Lebesgue measure. We obtain
\[
Leb\{(\theta,s)\in\mathbb{R}^2:~(\theta-s)(\tilde{h}_\infty(\theta)-\tilde{h}_\infty(s))<0\}=0
\]
by Fubini's theorem. So there exists a set $E\subseteq\mathbb{R}$
and $Leb(E)=0$ such that if $\theta\notin E$, we have
$Leb\{s\in\mathbb{R}:~(\theta-s)(\tilde{h}_\infty(\theta)-\tilde{h}_\infty(s))<0\}=0$.
Hence for $\theta,s\notin E,~\theta<s$, and $a.e. ~\theta<u<s$
$\tilde{h}_\infty(\theta)\leq\tilde{h}_\infty(u)$ and
$\tilde{h}_\infty(u)\leq\tilde{h}_\infty(s)$, that is,
$\tilde{h}_\infty(\theta)\leq\tilde{h}_\infty(s)$ holds for $a.e.~
\theta,s\notin E$. (It is easy to see that
$\tilde{h}_\infty(\theta)=ess.\inf_{s\geq \theta}h(s)$ holds $a.e.
~\theta$.)

Take $\overline{h}\in Y$ such that
$\overline{h}(\theta)=\tilde{h}_\infty(\theta)
~a.e.~\theta\in\mathbb{R}$. We have
$\mathscr{P}_\omega(\overline{h})=\mathscr{P}_\omega(\tilde{h}_\infty)\leq\mathscr{P}_\omega(h)$
due to \eqref{above inequality2}. The final step is to choose $a$
such that $h_0=\bar{h}\circ T_a\in X$. This choice is explained at
the end of Remark \ref{Definitionofa}. Then, we have
\[
\mathscr{P}_\omega(h_0)=\mathscr{P}_\omega(\overline{h})\leq\mathscr{P}_\omega(h)\leq\mathscr{P}_\omega(\tilde{h}).
\]
where the first equality is a consequence of the translation
invariance of $\mathscr{P}_\omega$, the next inequality follows from
the property of $\overline{h}$ and the last inequality holds because
$h$ minimizes $\mathscr{P}_\omega$ over $Y_A(\tilde{h})$ for $A$
sufficient large. This completes the proof.
\end{proof}
%\theoremstyle{remark} Remark. Theorem 1' is an immediate consequence
%of Lemma \ref{lemma:Last}.

\section{Minimizers of the Percival Lagrangian
give rise to ground  states. } \label{minimizers-ground_states}

In this section, we prove that the minimizers of
$\mathscr{P}_\omega$ give rise to ground states when $\omega$ is
both non-resonant and resonant.

\begin{Theorem}\label{aregroundstates}
Let $h_\omega$ be a minimizer of $\mathscr{P}_\omega$ as in Theorem
\ref{Theorem:Basic}. The configuration
$u_i=h_\omega(\theta+\omega\cdot i)$ when $\omega$ is non-resonant
is an $\omega$-Birkhoff ground state.
\end{Theorem}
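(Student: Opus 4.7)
The plan splits into (a) showing $u$ is $\omega$-Birkhoff and (b) showing it is a ground state. Part (a) is immediate: since $h_\omega \in Y$ is monotone with $h_\omega(s+1) = h_\omega(s)+1$, the explicit computation preceding the definition of $\omega$-Birkhoff configurations already establishes the property for $u_i = h_\omega(\theta + \omega \cdot i)$.

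For part (b) I will argue by contradiction. Suppose there exist $N \geq 1$ and $\varphi : \mathbb{Z}^d \to \mathbb{R}$ with $\varphi_i = 0$ for $|i| \geq N$ such that
\[
\Delta := \sum_{|i| \leq N+1,\, j} \left[ H_j(u_i+\varphi_i, u_{i+e_j}+\varphi_{i+e_j}) - H_j(u_i, u_{i+e_j}) \right] = -\delta < 0.
\]
The idea is to convert this local perturbation of $u$ into a small bump perturbation of $h_\omega$ that decreases $\mathscr{P}_\omega$, contradicting minimality. Using non-resonance, the points $\theta + \omega \cdot l \bmod 1$ for $|l| \leq N+2$ are distinct in $\mathbb{T}^1$; pick $\epsilon > 0$ small enough that their closed $\epsilon$-neighborhoods $I_l$ are pairwise disjoint and the only translate relation $(I_l + \omega_j) \cap I_{l'} \neq \emptyset$ is $l' = l + e_j$. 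Set
\[
\tilde h_\epsilon(\theta) = h_\omega(\theta) + \sum_{|l| \leq N+1} \varphi_l\, \mathbbm{1}_{I_l}(\theta),
\]
extended to $\mathbb{R}$ by $\tilde h_\epsilon(\theta+1) = \tilde h_\epsilon(\theta)+1$. This lies in $Y^*_1$ (though typically not in $Y$, since the bumps break monotonicity).

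The key computation is that the disjointness conditions localize the integrand of $\mathscr{P}_\omega(\tilde h_\epsilon) - \mathscr{P}_\omega(h_\omega)$ to $\bigcup_l I_l$, with the $I_l$-contribution in the $j$-term equal to
\[
\int_{I_l} \left[ H_j(h_\omega(\theta)+\varphi_l, h_\omega(\theta+\omega_j)+\varphi_{l+e_j}) - H_j(h_\omega(\theta), h_\omega(\theta+\omega_j)) \right] d\theta,
\]
with the convention $\varphi_k = 0$ for $|k| > N+1$. Dividing by $2\epsilon$ and invoking Lebesgue differentiation at the (full-measure) continuity points of $h_\omega$, the average converges as $\epsilon \to 0$ to the telescoping sum $\Delta = -\delta$. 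Hence $\mathscr{P}_\omega(\tilde h_\epsilon) < \mathscr{P}_\omega(h_\omega)$ for all sufficiently small $\epsilon > 0$, contradicting the fact that $h_\omega$ is also a minimizer over the larger space $Y^*_1$ (Theorem \ref{Th:Extended}). The main technical obstacle is precisely this interplay between the two ingredients: bump perturbations must be taken in $Y^*_1$ rather than $Y$ (so the Extended Theorem identifying the two minimization problems is essential), and the reduction of the integral identity to the pointwise quantity $\Delta$ requires the base point $\theta$ (or a small shift of it) to be a Lebesgue point of $h_\omega$; exceptional $\theta$ at jumps of $h_\omega$ are handled by noting that continuity of $H_j$ propagates a strict decrease in $\Delta$ to an open neighborhood of $\theta$, bringing us back to the generic case.
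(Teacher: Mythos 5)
Your proposal is correct and follows essentially the same route as the paper: both convert the hypothesized finite perturbation $\varphi$ of $u$ into a perturbation of $h_\omega$ supported on small intervals around the points $\theta+\omega\cdot i \bmod 1$ (pairwise disjoint by non-resonance, with only the intended overlaps after translation by $\omega_j$), and then contradict the minimality of $h_\omega$ over $Y_1^*$ supplied by Theorem~\ref{Th:Extended}. The only cosmetic difference is that the paper replaces $h_\omega$ by the constant $\tilde u_i$ on one-sided intervals $[t+\omega\cdot i-\delta,\,t+\omega\cdot i]$ and lets $\theta\uparrow t$ using the left-continuity of $h_\omega\in Y$, whereas you add bumps on symmetric neighborhoods and invoke Lebesgue differentiation at continuity points.
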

Out of Theorem \ref{aregroundstates} we can obtain several results
using approximation arguments. We present two representative
results, Corollary \ref{Birkhoffgroundstate} (based on approximation
in the orbit formalism) and Corollary \ref{WBirkhoffgroundstate}
(based on the hull function formalism). Since the result of
Corollary \ref{Birkhoffgroundstate} is based on choices of
approximating subsequences, it is not clear that the orbits produced
are the same.
\begin{Corollary}\label{Birkhoffgroundstate}
Given any frequency $\omega\in \mathbb{R}^d$, there is a Birkhoff
ground state of frequency $\omega$.
\end{Corollary}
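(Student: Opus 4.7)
The plan is to obtain the Birkhoff ground state for an arbitrary (possibly resonant) $\omega\in\mathbb{R}^d$ by approximating in the orbit formalism from the non-resonant case already handled by Theorem \ref{aregroundstates}. First, I would choose a sequence of non-resonant frequencies $\omega^{(n)}\in\mathbb{R}^d$ with $\omega^{(n)}\to\omega$; such sequences exist trivially since non-resonant vectors are dense. For each $n$, Theorem \ref{aregroundstates} supplies a hull function $h_{\omega^{(n)}}$ such that the configuration $u^{(n)}_i = h_{\omega^{(n)}}(\omega^{(n)}\cdot i)$ is an $\omega^{(n)}$-Birkhoff ground state of $\mathscr{L}$.

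Next, I would normalize these configurations to extract a subsequential limit. Using the symmetry of the variational problem, we may add an integer and apply an integer shift of the index so that $u^{(n)}_0\in[0,1)$ for every $n$. Then the $\omega^{(n)}$-Birkhoff property (taking $l=-\lfloor\omega^{(n)}\cdot i\rfloor$) together with the bound $|u^{(n)}_i-u^{(n)}_0-\omega^{(n)}\cdot i|\le 2$ from Proposition \ref{rotnum} yields a uniform bound on each $u^{(n)}_i$ for $i$ in any fixed finite subset of $\mathbb{Z}^d$, with a bound depending only on $|i|$ and a uniform upper bound for $|\omega^{(n)}|$. By Tikhonov's theorem (or a diagonal extraction), pass to a subsequence along which $u^{(n_k)}_i$ converges pointwise in $i$ to some configuration $u_i:\mathbb{Z}^d\to\mathbb{R}$.

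The next step is to verify that $u$ is Birkhoff with rotation vector $\omega$ and a ground state. For any fixed $k\in\mathbb{Z}^d$, $l\in\mathbb{Z}$, the $\omega^{(n_k)}$-Birkhoff inequality at $(k,l)$ has a definite sign depending only on the sign of $\omega^{(n_k)}\cdot k+l$, which stabilizes once $n_k$ is large enough (unless $\omega\cdot k+l=0$, in which case one inequality still holds along the subsequence and passes to the limit). Taking pointwise limits of the inequalities $u^{(n_k)}_{i+k}+l\ge u^{(n_k)}_i$ (or the reverse) yields the Birkhoff property for $u$ in the sense of Definition \ref{Birkhoffconfiguration}, and Proposition \ref{rotnum} then identifies the rotation vector as $\omega$ thanks to the uniform estimate $|u^{(n_k)}_i-u^{(n_k)}_0-\omega^{(n_k)}\cdot i|\le 2$.

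Finally, to see that $u$ is a class-A minimizer, fix any $\varphi:\mathbb{Z}^d\to\mathbb{R}$ of finite support contained in $\{|i|\le N\}$. The ground-state inequality \eqref{classa} for $u^{(n_k)}$ involves only finitely many terms and finitely many values $u^{(n_k)}_i$ with $|i|\le N+1$. By continuity of each $H_j$ and the pointwise convergence $u^{(n_k)}\to u$ on this finite set, the inequality passes to the limit, giving \eqref{classa} for $u$. Thus $u$ is a ground state with rotation vector $\omega$, as desired. The main technical worry is the normalization and compactness step: one must ensure that the chosen shifts do not destroy the ability to pass the Birkhoff inequalities to the limit, but since the bound $|u^{(n_k)}_i-u^{(n_k)}_0-\omega^{(n_k)}\cdot i|\le 2$ is uniform in $n_k$, the diagonal extraction produces a genuine limit on all of $\mathbb{Z}^d$ and the argument closes. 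Note that in the resonant case the limit $u$ need not be $\omega$-Birkhoff (cf. Remark \ref{notomegabirkhoff}), which is why the corollary is stated only as a Birkhoff ground state.
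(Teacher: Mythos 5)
Your proposal is correct and follows essentially the same route as the paper: approximate $\omega$ by non-resonant frequencies, normalize so that $u^{(n)}_0\in[0,1)$, use the uniform Birkhoff bound $|u^{(n)}_i-\omega^{(n)}\cdot i|\le 2$ plus a diagonal extraction to get a pointwise limit, pass the finite-support minimality inequality to the limit, and recover the Birkhoff property by stabilizing the comparison sign for each fixed $(k,l)$. The only cosmetic difference is that the paper phrases the ground-state step as a contradiction argument while you pass the inequality \eqref{classa} to the limit directly; these are equivalent.
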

It is amusing to note that in the orbit based approach
\cite{Rafael'98, Rafael'07a, Rafael'10}, it is more convenient to
construct ground states of non-resonant frequencies approximating
them by ground states of rational frequencies. Now, we find it more
convenient to construct ground states of non-resonant frequencies
and use an approximation argument to get those of rational
frequencies.
\begin{proof}
The proof of the corollary \ref{Birkhoffgroundstate} is very simple.
We observe that given any $\omega$, we can find a sequence
$\omega_n$ of nonresonant vectors such that $\lim_{n \to \infty}
\omega_n = \omega$. Denote by $u^n$, the ground states corresponding
to this sequence. By the invariance of the action under addition of
integers we can assume that that $u^n_0 \in [0, 1)$ and by the
Birkhoff property, $| u^n_i - \omega_n \cdot i| \le 2$. It follows
that, using the diagonal trick, we can assume that $\lim_{n \to
\infty}  u^n_i  = u^*_i$ exists for all $i \in \mathbb{Z}^d$.

Then, it is a classical argument in \cite{Morse'73} to show that
$u^*$ is a ground state. Suppose by contradiction that we could find
$\varphi$ such that $\varphi_i = 0, |i| \ge N-1$ and that $
\mathscr{L}_N(u^*) - \mathscr{L}_N(u^* + \varphi) \ge \delta > 0$.
Since $\mathscr{L}_N$ involves only finitely many sites, we can find
$n^*$ such that $ \mathscr{L}_N(u^{n^*}) - \mathscr{L}_N(u^{n^*} +
\varphi) \ge \delta/2 > 0$. This is a contradiction with $u^{n^*}$
being a ground state.

To finish the argument, we show that the limiting sequence is
Birkhoff. Fixed $k\in\mathbb{Z}^d,~~l\in\mathbb{Z}$, we can find an
infinite sequence of $n$'s in which the comparison between
$(\tau_k\circ R_l) u^n$ (where $\tau_k$ and $R_l$ are the horizontal
and vertical translations respectively) and $u^n$ has the same sign.
Therefore, the limit of $(\tau_k\circ R_l) u^*$ can be compared with
$u^*$.

Of course, it is perfectly possible that for each of the two
possible comparison signs between $(\tau_k\circ R_l) u^n$ and $u^n$,
there are infinitely many $n$'s. In this case, $u^*$ would satisfy
both comparisons.
\end{proof}

\begin{Corollary}\label{WBirkhoffgroundstate}
$u_i=h_\omega(\theta+\omega\cdot i)$ is an $\omega$-Birkhoff ground
state for any rotation vector $\omega\in \mathbb{R}^d$.
\end{Corollary}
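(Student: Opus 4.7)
The plan is to extend Theorem~\ref{aregroundstates} from non-resonant to arbitrary $\omega$ by an approximation argument that stays inside the hull-function formalism, so that the $\omega$-Birkhoff property is preserved (unlike in the proof of Corollary~\ref{Birkhoffgroundstate}, which works in the orbit formalism and therefore only yields plain Birkhoff). The $\omega$-Birkhoff conclusion itself is essentially automatic: any $h \in Y$ is monotone and satisfies $h(\theta+1)=h(\theta)+1$, and the short computation just before Definition~2 shows that any configuration of the form $h(\theta+\omega\cdot i)$ with such $h$ is $\omega$-Birkhoff. Thus the whole content of the corollary is to show that the minimizer $h_\omega$ can be chosen so that the induced configuration is a ground state when $\omega$ is resonant.

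First, I would pick a sequence of non-resonant vectors $\omega_n \to \omega$ (dense in $\mathbb{R}^d$ by elementary arguments) and, for each $n$, apply Theorem~\ref{Theorem:Basic} to obtain a minimizer $h_n \in Y$ of $\mathscr{P}_{\omega_n}$. Using the symmetries (a),(b) I normalize so that $h_n(0) \in [0,1)$, which places the $h_n$ in the compact subset $\mathcal{C}$ of $Y$ (in the graph topology) already used in the proof of Theorem~\ref{Theorem:Basic}. Passing to a subsequence, $h_n \to h_*$ in the graph topology with $h_* \in Y$. To see that $h_*$ minimizes $\mathscr{P}_\omega$, I combine two ingredients: the graph-continuity estimate \eqref{estimate} for $\mathscr{P}_\omega$ on $Y$ (which is uniform as long as $\omega$ varies in a bounded set), and the continuity of $\omega \mapsto \mathscr{P}_\omega(h)$ for fixed $h \in Y$ (dominated convergence applied to each $H_j(h(\theta),h(\theta+\omega_j))$ since $h$ is locally bounded and monotone). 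These together let me pass $\mathscr{P}_{\omega_n}(h_n) \le \mathscr{P}_{\omega_n}(h)$ to $\mathscr{P}_\omega(h_*) \le \mathscr{P}_\omega(h)$ for every competitor $h \in Y$, so I may take $h_\omega := h_*$.

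Next I transfer the ground state property. By Theorem~\ref{aregroundstates} each configuration $u^n_i := h_n(\theta + \omega_n\cdot i)$ is a ground state. The Birkhoff bound of Proposition~\ref{rotnum} gives $|u^n_i - \omega_n\cdot i - u^n_0| \le 2$, which, together with the normalization $u^n_0 \in [0,1)$ and a diagonal extraction, yields a pointwise limit $u^*_i := \lim_n u^n_i$ for every $i \in \mathbb{Z}^d$. The Morse--Hedlund argument already used in the proof of Corollary~\ref{Birkhoffgroundstate} (the action in \eqref{classa} involves only finitely many sites) shows that $u^*$ is a ground state. Finally, the set of points of discontinuity of $h_*$ is countable, hence so is its $\mathbb{Z}$-saturation and its preimage under the countable family of translations $\theta \mapsto \theta - \omega\cdot i$; for any $\theta$ outside this countable set, graph convergence $h_n \to h_*$ combined with $\omega_n \to \omega$ gives $u^*_i = h_*(\theta + \omega\cdot i)$ for all $i$, and the claim follows.

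The main obstacle I expect is precisely this last identification: when $\omega$ is resonant the discontinuity set of $h_*$ can interact delicately with the orbit $\{\omega\cdot i\}$, and graph convergence does not, in general, imply pointwise convergence at discontinuity points. The remedy above (choose $\theta$ to avoid a countable exceptional set) is clean, but one should also note that changing $h_*$ on its (at most countable) discontinuity set to its left- or right-continuous representative does not affect $\mathscr{P}_\omega(h_*)$ nor the $\omega$-Birkhoff property of the resulting configuration, so the slight ambiguity at bad $\theta$'s is harmless.
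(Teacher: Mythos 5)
Your route is genuinely different from the paper's, and its main engine (a pointwise limit of ground states is a ground state, plus the observation that any configuration $h(\theta+\omega\cdot i)$ with $h\in Y$ is automatically $\omega$-Birkhoff) is sound. The paper instead argues directly, exactly as in the proof of Theorem~\ref{aregroundstates}: given a hypothetical energy-decreasing local perturbation $\tilde u$ of $u$, it modifies $h_\omega$ near the finitely many points $t+\omega\cdot i$, $|i|\le K$; because $\omega$ is resonant these local modifications are incompatible with period-$1$ periodicity, so the competitor $\tilde h$ only lies in $Y^*_N$ for some large $N$, and the contradiction is closed by the non-symmetry-breaking property $\inf_{Y^*_1}\mathscr{P}_\omega=\inf_{Y^*_N}\mathscr{P}_\omega$ flagged just before the corollary. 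This difference matters: the paper's argument applies to an \emph{arbitrary} minimizer $h_\omega$ of $\mathscr{P}_\omega$, which is what the corollary asserts (recall Theorem~\ref{aregroundstates} begins ``Let $h_\omega$ be a minimizer of $\mathscr{P}_\omega$ as in Theorem~\ref{Theorem:Basic}''). Your construction only produces \emph{one} minimizer $h_*$ --- a graph-limit of minimizers for nearby non-resonant frequencies --- with the ground-state property. In the resonant case the set of such limits may be a proper subset of the set of minimizers of $\mathscr{P}_\omega$, and your argument says nothing about the others. As a proof of the corollary as stated it therefore falls short, although it does suffice for the weaker existence statement and hence for re-deriving Corollary~\ref{Birkhoffgroundstate} with the extra $\omega$-Birkhoff conclusion.

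Two smaller points. First, the identification $u^*_i=h_*(\theta+\omega\cdot i)$ via graph convergence works only for $\theta$ outside a countable exceptional set; to cover every $\theta$ (which the corollary implicitly claims, $\theta$ being arbitrary) you should take good points $\theta_k\uparrow\theta$, use left-continuity of $h_*$ to get $h_*(\theta_k+\omega\cdot i)\to h_*(\theta+\omega\cdot i)$, and invoke once more that a pointwise limit of ground states is a ground state; your proposed remedy of replacing $h_*$ by a one-sided-continuous representative does not by itself show that the configuration based at a bad $\theta$ is a ground state. Second, your passage to the limit in the minimization ($\mathscr{P}_{\omega_n}(h_n)\le\mathscr{P}_{\omega_n}(h)$ implies $\mathscr{P}_\omega(h_*)\le\mathscr{P}_\omega(h)$) is correct, but you should say explicitly that the modulus of continuity in the graph metric obtained from \eqref{estimate} is uniform in $\omega$, since that is what lets you replace $h_n$ by $h_*$ and $\omega_n$ by $\omega$ simultaneously.
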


The proof of Corollary \ref{WBirkhoffgroundstate} uses the fact that
the hull function we obtain satisfies the non-symmetry breaking
property, which means $\mathscr{P}_\omega$ reach the same minimum
over $Y^*_1$ and $Y^*_n$ for any $n\in \mathbb{Z}$ (see \cite[Lemma
7.3]{Mather'85}). Since the technique of the proof of Corollary
\ref{WBirkhoffgroundstate} is very similar to that of Theorem
\ref{aregroundstates} we postpone it.
%in the proof of Theorem \ref{aregroundstates} replacing $Y^*_1$ by $Y^*_n$ for some large $n\in\mathbb{Z}$.
\begin{Remark}
In fact, Corollary \ref{WBirkhoffgroundstate} implies Corollary
\ref{Birkhoffgroundstate}. The non-symmetry breaking property plays
an important role here.
\end{Remark}

\subsection{Proof of Theorem~\ref{aregroundstates} }

We use arguments inspired by \cite{Mather'85} but we require some
more detailed computations.

Suppose $u_i$ is not a ground state, so there exists a configuration
$\tilde{u}_i$ and $K\in\mathbb{Z}^+$ such that $\tilde{u}_i=u_i$ if
$|i|\geq~K$ and $\mathscr{L}_K(\tilde{u})<\mathscr{L}_K(u)$. Let
$1\gg\delta >0$ and set
\begin{align}
&\tilde{h}(\theta)=\tilde{u}_i,\qquad if ~~ |i|\leq K,~t+\omega\cdot
i-\delta\leq\theta\leq t+\omega\cdot
i,\nonumber\\
&\tilde{h}(\theta+1)=\tilde{h}(\theta)+1,~~for ~all
~\theta,~~and\nonumber\\
&\tilde{h}(\theta)=h_\omega(\theta),~whenever~\tilde{h}(\theta)~is~not~
defined~by~the~previous~two~conditions.\nonumber
\end{align}
Since $\omega$ is non-resonant and $\delta$ is small, there is no
contradiction between the first two conditions. Consequently,
$\tilde{h}$ is well-defined and $\tilde{h}\in Y_1^*$.
\begin{align}
\mathscr{P}_{\omega}(h_\omega)-\mathscr{P}_{\omega}(\tilde{h})=&\sum_{j=1}^d\int_a^{a+1}[H_j(h_\omega(\theta),h_\omega(\theta+\omega_j))-H_j(\tilde{h}(\theta),\tilde{h}(\theta+\omega_j))]d\theta\nonumber\\
                                                       =&\int_{t-\delta}^{t}[A(\theta)+B(\theta)+C(\theta)]d\theta\nonumber
\end{align}
where
\begin{align}
A(\theta)=&\mathscr{L}_K(h_\omega(\theta+\omega\cdot i))-\mathscr{L}_K(\tilde{u})\nonumber\\
B(\theta)=&\sum_{j=1}^d[\sum_{\substack{i\in\mathbb{Z}^d,|i|=K\\i_j\geq0}}H_j(h_\omega(\theta+\omega\cdot i),h_\omega(\theta+\omega\cdot i+\omega_j))\nonumber\\
                       &\qquad-\sum_{\substack{i\in\mathbb{Z}^d,|i|=K\\i_j\geq0}}H_j(h_\omega(t+\omega\cdot i),h_\omega(\theta+\omega\cdot i+\omega_j))]\nonumber\\
C(\theta)=&\sum_{j=1}^d[\sum_{\substack{i\in\mathbb{Z}^d,|i|=K\\i_j\leq0}}H_j(h_\omega(\theta+\omega\cdot i-\omega_j),h_\omega(\theta+\omega\cdot i))\nonumber\\
                       &\qquad-\sum_{\substack{i\in\mathbb{Z}^d,|i|=K\\i_j\leq0}}H_j(h_\omega(\theta+\omega\cdot i-\omega_j),h_\omega(t+\omega\cdot i))]\nonumber
\end{align}
%\marginpar{I commented out the figure since I believe it
%does not add much. In any case, if it is included, then, one has to
%do it much more carefully and, in particular, ensure that it looks
%decent when you present it in color}

%\begin{comment} See the figure below for more concrete
%verification.
%\begin{figure}
%\includegraphics{picture.eps}
%\caption{Description for above calculation when d=2,K=17,}The red
%line between the nearest two point represent the horizontal energy
%$H_1$ while the blue line represents the vertical energy $H_2$. All
%the red and blue lines in the yellow frame is the energy denoted by
%$\mathscr{L}_{17}(h(\theta+\omega\cdot i))$.
%$\mathscr{L}_{18}(h(\theta+\omega\cdot
%i))-\mathscr{L}_{17}(h(\theta+\omega\cdot i))$ is the total nearest
%energy between the points on the yellow frame and the points out.
%\end{figure}

%\end{comment}
Clearly, $A(\theta)\rightarrow
A(t)=\mathscr{L}_K(u)-\mathscr{L}_K(\tilde{u})>0$,
$B(\theta)\rightarrow0$ and $C(\theta)\rightarrow0$ as
$\theta\uparrow t$. Consequently,
$\mathscr{P}_{\omega}(h)-\mathscr{P}_{\omega}(\tilde{h})>0$ for
$\delta>0$ small enough. But this contradicts the fact that $h$
minimizes $\mathscr{P}_\omega$ over $Y_1^*$ if we already know the
fact that $h_\omega$ minimizes $\mathscr{P}_\omega$  over
$Y_1^*$.\qed

\begin{proof}[Proof of Corollary \ref{WBirkhoffgroundstate}]
Suppose $u_i$ is not a ground state, so there exists a configuration
$\tilde{u}_i$ and $K\in\mathbb{Z}^+$ such that $\tilde{u}_i=u_i$ if
$|i|\geq~K$ and $\mathscr{L}_K(\tilde{u})<\mathscr{L}_K(u)$. Let
$1\gg\delta >0$ and set
\begin{align}
&\tilde{h}(\theta)=\tilde{u}_i,\qquad if ~~ |i|\leq K,~t+\omega\cdot
i-\delta\leq\theta\leq t+\omega\cdot
i,\nonumber\\
&\tilde{h}(\theta+N)=\tilde{h}(\theta)+N,~~for ~all
~\theta,~~and\nonumber\\
&\tilde{h}(\theta)=h_\omega(\theta),~whenever~\tilde{h}(\theta)~is~not~
defined~by~the~previous~two~conditions.\nonumber
\end{align}
Consequently, $\tilde{h}$ is well-defined and $\tilde{h}\in Y_N^*$
for some sufficiently large $N$. By non-symmetry breaking property,
we get the same contradiction.
\end{proof}

\section{Existence of non-minimal critical points}
We will refer to the functions given in the form \eqref{hullform} as
``quasi-periodic". In some literature, the term quasi-periodic  is
reserved for situations when $h$ is smooth, whereas we will accept
$h$ which are discontinuous. In some literature, these functions are
given the name ``almost-automorphic" in \cite{Ellis}. We will follow
the customary notation in the calculus of variations. One of the
most interesting phenomena in Aubry-Mather theory is that the
quasi-periodic solutions obtained may be discontinuous.

We note that the discontinuity of the minimizers has profound
physical and dynamical interpretations. In the solid state physical
interpretation, if $h\circ T_a$ is a continuous family of critical
points, the physical system can ``slide" whereas if $h\circ T_a$
involves discontinuity, the system is ``pinned". In the case of
twist maps, that $h_\omega$ is continuous corresponds to an
invariant orbit, which is a complete barrier for transport.

%In this section, we will study in more details the situation when
%$h$ is discontinuous. That is, when the range of $h$ has gaps. The
%main technical theorem will be Theorem \ref{Theorem:Main2} which is
%quite analogous to comparison theorem in elliptic PDE's (as pointed
%out in \cite{Moser'86}). The main application will be to compare a
%function with its translates.

In this section, we study the situation when there are two
minimizers which are comparable. Similar results in PDE were studied
in \cite{LlaveVPDE}. There are other more delicate results that show
that if there are gaps in the range of $h$, then there is another
minimizing sequence \cite{Mather'86}. In \cite{LlaveVPDE}, one can
find a proof using the gradient flow approach in spaces of
sequences. We do not present these results here. Indeed we do not
know how do they fit in the hull function approach, except in the
rational frequency case.

\begin{Theorem}\label{Theorem:Main2}
Suppose $h^-< h^+$ are both minimizers of $\mathscr{P}_\omega$ on Y
with frequency vector $\omega$ not completely resonant (not all the
components of $\omega$ are rational numbers). Then,
\begin{enumerate}
\item $h^-\prec\!\!\prec h^+$;
\item There exists a critical point $h^0$ of $\mathscr{P}_\omega$ such
that $h^-\prec\!\!\prec h^0\prec\!\!\prec h^+$ holds;
%\item if for any family $h_\lambda,~\exists~\theta_0$ such that $\cup_\lambda\{h_\lambda(\theta_0)\}$ is not the
%whole real line, there exists a critical point which is not a
%minimizer of $\mathscr{P}_\omega$.
\end{enumerate}
\end{Theorem}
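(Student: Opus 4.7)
The overall plan is to deduce (1) from the Euler-Lagrange equation together with the strict twist condition, and (2) by a mountain-pass / gradient-flow argument on the compact order interval $K = \{h \in Y : h^- \le h \le h^+\}$.

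For (1), I argue by contradiction: suppose $h^-(\theta_0) = h^+(\theta_0) =: v$ at some $\theta_0 \in \mathbb{R}$. By Theorem~\ref{Theorem:Basic} both $h^\pm$ satisfy \eqref{EulerLagrangePercival} pointwise, so subtracting the two equations at $\theta_0$ gives
\[
\sum_{j=1}^d \bigl[\partial_1 H_j(v, h^+(\theta_0+\omega_j)) - \partial_1 H_j(v, h^-(\theta_0+\omega_j))\bigr] + \sum_{j=1}^d \bigl[\partial_2 H_j(h^+(\theta_0-\omega_j), v) - \partial_2 H_j(h^-(\theta_0-\omega_j), v)\bigr] = 0.
\]
By (H2) the maps $y \mapsto \partial_1 H_j(v,y)$ and $x \mapsto \partial_2 H_j(x,v)$ are strictly decreasing, and since $h^- \le h^+$ every summand is $\le 0$; therefore each must vanish, forcing $h^-(\theta_0 \pm \omega_j) = h^+(\theta_0 \pm \omega_j)$ for every $j$. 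Iterating, $h^-$ and $h^+$ agree on the entire orbit $\theta_0 + \mathbb{Z}^d\cdot\omega + \mathbb{Z}$. Because $\omega$ has at least one irrational component, this orbit is dense in $\mathbb{R}/\mathbb{Z}$, and the left-continuity built into $Y$ propagates the equality to all of $\mathbb{R}$, contradicting $h^- \ne h^+$.

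For (2), note that $K$ is compact in the graph topology (by the order-interval remark of Section~\ref{background}) and that $\mathscr{P}_\omega$ is continuous on $K$; set $\beta = \mathscr{P}_\omega(h^\pm)$ and
\[
c = \inf_{\gamma} \max_{s \in [0,1]} \mathscr{P}_\omega(\gamma(s)),
\]
where $\gamma$ ranges over continuous paths in $K$ from $h^-$ to $h^+$. When $c > \beta$, I would implement a pseudo-gradient deformation for the Euler-Lagrange vector field $X$ in the unconstrained ambient space $Y^*$ (where deformations are easy, as exploited in the proof of Theorem~\ref{Th:Extended}) and use the twist condition (H2) to show that the resulting flow is comparison-preserving, so that $K$ is positively invariant; the mountain-pass theorem then yields a critical point $h^0 \in K$ with $\mathscr{P}_\omega(h^0) = c > \beta$, so $h^0 \ne h^\pm$. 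In the degenerate case $c = \beta$, every near-minimax path lies in an arbitrarily thin sublevel set of $\mathscr{P}_\omega$; combining compactness of $K$, lower semi-continuity of $\mathscr{P}_\omega$ and path-connectedness of the minimax curves, one extracts a further minimizer $h^0 \in K \setminus \{h^\pm\}$ (for example as a limit of midpoints of minimizing paths, normalized using the translation symmetry (b) to keep it separated from the endpoints). In either case $h^0$ is a critical point of $\mathscr{P}_\omega$, so reapplying the cascading-equality argument of (1) to the pairs $(h^-, h^0)$ and $(h^0, h^+)$---which only needed the Euler-Lagrange equation, not minimality---upgrades the weak inequalities to $h^- \prec\!\!\prec h^0 \prec\!\!\prec h^+$.

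The main obstacle is giving a rigorous meaning to the pseudo-gradient deformation in a hull-function space and handling the degenerate mountain-pass level. The functional $\mathscr{P}_\omega$ is not $C^1$ in any standard Banach structure on $Y$ or $Y^*$, so one cannot quote classical mountain-pass results off the shelf; instead the twist condition (H2) must be used to prove directly that the formal flow $\dot h = -X(h)$ is order-preserving (a maximum-principle/comparison statement), so that it leaves $K$ invariant and sublevel sets deform into one another along it. The subtler subcase is $c = \beta$: one must exclude that every minimax sequence collapses onto $h^\pm$, and this is where compactness of the quotient $K/\!\!\sim$ under the translation symmetry, combined with the strict inequality already established in (1), is expected to provide the necessary separation.
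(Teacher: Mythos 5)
Your part (1) takes a genuinely different and more elementary route than the paper. The paper deduces strict ordering from the strong comparison principle for the gradient flow (Lemma~\ref{lemma:SCP} applied to the fixed points $h^\pm$, i.e.\ Lemma~\ref{lemma:CP}); you instead subtract the Euler--Lagrange equations at a putative contact point and use the strict twist (H2) to force the contact to cascade along the orbit $\theta_0+\mathbb{Z}^d\cdot\omega+\mathbb{Z}$, which is dense because some component of $\omega$ is irrational, and then use monotonicity and left-continuity to upgrade equality on a dense set to equality everywhere. This is correct (it is the hull-function version of Aubry's non-crossing argument) and avoids constructing the flow entirely for part (1); it does lean on the Euler--Lagrange equation holding pointwise at every point of the orbit, which Theorem~\ref{Theorem:Basic} supplies. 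Both arguments use the irrational component in the same way, namely to propagate a single degeneracy densely around the circle.

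For part (2) your overall strategy coincides with the paper's (the flow $\dot h=-X(h)$ on the compact order interval $\mathcal{K}=\{h\in Y: h^-\le h\le h^+\}$, positive invariance via a comparison principle, and a dichotomy on whether the minimax level exceeds $\beta=\mathscr{P}_\omega(h^\pm)$), but two pieces are missing. First, the order-preservation of the flow, which you defer, is the real content; the paper proves it in Lemma~\ref{lemma:SCP} by a positivity argument for the linearized (variational) equation, again propagated by the irrational component, and this also disposes of your worry that $\mathscr{P}_\omega$ is not $C^1$: one works with the explicit globally Lipschitz vector field $X$ on $L^\infty$ (Lemma~\ref{lemma:gf}) and the identity $\frac{d}{dt}\mathscr{P}_\omega(\Phi^t(h))=-\int_0^1|X(\Phi^t(h))|^2\,d\theta$, so no abstract pseudo-gradient is needed. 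Second, and this is a genuine gap, your degenerate case $c=\beta$ does not work as written: the midpoint of a near-minimizing path can perfectly well converge to $h^-$ or $h^+$, and normalizing by the translation symmetry $T_a$ neither keeps you inside $\mathcal{K}$ nor separates you from the endpoints, since the translates of a minimizer are again minimizers at the same level. The paper handles this case by a connectedness argument: assuming $h^\pm$ are the only critical points in $\mathcal{K}$, every omega-limit set of $\Phi^t(h^s)$ lies in $\{h^-,h^+\}$, and one reaches a contradiction either because a connected trajectory tail $(M_0(r),\infty)$ would be covered by the two disjoint nonempty open sets $D^\pm$, or because the connected parameter interval $(0,1)$ would be partitioned into the two disjoint nonempty open sets $E^\pm$ of initial data converging to $h^-$ resp.\ $h^+$. (A fix closer to your setup: any path in $\mathcal{K}$ from $h^-$ to $h^+$ with $\max\mathscr{P}_\omega\le\beta+1/n$ must meet the compact set $\mathcal{K}\setminus(B_r(h^-)\cup B_r(h^+))$, and a subsequential limit of such crossing points is a minimizer in $\mathcal{K}$ distinct from $h^\pm$.)
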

To prove Theorem \ref{Theorem:Main2}, we will use the gradient flow
method (see \cite{Rafael'97,Gole'01}) for $Y\subseteq L^\infty$.
\begin{Lemma}\label{lemma:gf}
Assume $\partial_1H_j,~\partial_2H_j$ are uniformly $C^r,~r\geq1$.
The infinite system of ODE's:
\begin{equation}\label{ODEsystem}
\left\{\!\!\!
  \begin{array}{rl}
   &\frac{d}{dt} h^t=-X(h^t)\equiv -\sum_{j=1}^d[\partial_1 H_j(h^t,h^t\circ T_{\omega_j})+\partial_2H_j(h^t\circ T_{-\omega_j},h^t)]\\
   &h^0=h_0
  \end{array}
\right.
\end{equation}
defines a $C^r$ flow $\Phi^t$ on $L^\infty$. The rest points of
$\Phi^t$ correspond to critical points of the Percival Lagrangian
$\mathscr{P}_\omega$.
\end{Lemma}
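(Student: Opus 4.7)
The strategy is to view \eqref{ODEsystem} as a classical ordinary differential equation in the Banach space $L^\infty(\mathbb{R})$, so that existence, uniqueness, and smoothness of the flow all follow from the Picard-Lindel\"of theorem and smooth dependence on initial conditions, once I check that the right-hand side $X$ is a $C^r$, globally Lipschitz map from $L^\infty$ into itself.

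The analytic core of the argument is precisely this smoothness of $X$. I would decompose $X$ into three ingredients: (i) the translations $h\mapsto h\circ T_{\pm\omega_j}$, which are linear isometries of $L^\infty$ and hence $C^\infty$; (ii) the Nemytskii (superposition) operators $N_F(h_1,h_2)(\theta)=F(h_1(\theta),h_2(\theta))$ associated to $F=\partial_1 H_j,\partial_2 H_j$; and (iii) a finite sum. Differentiating the relation $H_j(u+1,v+1)=H_j(u,v)$ in (H1) shows that $\partial_1 H_j$ and $\partial_2 H_j$ are $\mathbb{Z}^2$-periodic on $\mathbb{R}^2$, so together with our hypothesis they and all derivatives of order $\le r$ are uniformly bounded. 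The standard Nemytskii calculus on $L^\infty$ (pointwise Taylor expansion combined with uniform continuity of $D^k F$ on compact sets, bounded by the $L^\infty$ norms on each argument) then yields that $N_F$ is Fr\'echet $C^r$ from $L^\infty\times L^\infty$ into $L^\infty$, with the $k$-th derivative given pointwise by the $k$-th differential of $F$ evaluated on the arguments. Composing, $X:L^\infty\to L^\infty$ is $C^r$ with globally bounded first derivative, and so is globally Lipschitz.

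Once this is established, the rest is mechanical. Global Lipschitz continuity of $X$ on a Banach space supplies, via Picard-Lindel\"of, a globally defined continuous flow $\Phi^t$ on $L^\infty$, and the usual smooth-dependence theorem (e.g.\ differentiating the Picard iteration, or solving the variational equation $\tfrac{d}{dt}D\Phi^t=-DX(\Phi^t)\,D\Phi^t$) upgrades this to $C^r$ dependence on the initial datum; smoothness in $t$ follows immediately from $\tfrac{d}{dt}\Phi^t=-X\circ\Phi^t$. Finally, $h^*$ is a rest point of $\Phi^t$ if and only if $X(h^*)=0$ in $L^\infty$, i.e.
\[
\sum_{j=1}^d\bigl[\partial_1 H_j(h^*(\theta),h^*(\theta+\omega_j))+\partial_2 H_j(h^*(\theta-\omega_j),h^*(\theta))\bigr]=0\quad\text{for a.e.\ }\theta,
\]
which is exactly the Euler-Lagrange equation \eqref{EulerLagrangePercival} read in the almost-everywhere sense, i.e.\ the notion of critical point of $\mathscr{P}_\omega$ natural to this $L^\infty$ setting.

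The main obstacle is really the Nemytskii step in the second paragraph: one needs to verify that the formal pointwise derivative is the actual Fr\'echet derivative in the $L^\infty$ norm at every order up to $r$. This is by now standard, but it does require the uniform $C^r$ estimates provided by (H1); once it is granted, nothing in the remainder is more than textbook Banach-space ODE theory.
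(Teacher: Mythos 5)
Your overall strategy is exactly the paper's: the paper disposes of this lemma in one sentence by citing Banach-space ODE theory (Hale) on the grounds that $-X$ is globally Lipschitz, and your Nemytskii-operator elaboration is the right way to fill that in. However, the step that is supposed to supply the global bounds contains a concrete error. Differentiating (H1), $H_j(u+1,v+1)=H_j(u,v)$, gives only $\partial_i H_j(u+1,v+1)=\partial_i H_j(u,v)$, i.e.\ invariance under the simultaneous diagonal shift $(u,v)\mapsto(u+1,v+1)$, \emph{not} $\mathbb{Z}^2$-periodicity. Hence $\partial_1H_j,\partial_2H_j$ are bounded only on diagonal strips $\{|u-v|\le C\}$ (these are shift-invariant with compact quotient), not on all of $\mathbb{R}^2$. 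The standard Frenkel--Kontorova interaction $H(u,v)=-\tfrac12(u-v)^2+V(u)$ with $V$ $1$-periodic satisfies (H1)--(H3), yet $\partial_1H(u,v)=-(u-v)+V'(u)$ is unbounded; this is precisely why the paper's own continuity estimate uses $M=\max_j\sup_{|x-x'|\le2}|\partial_iH_j(x,x')|$ rather than a global supremum. As written, your claim that ``$X$ has globally bounded first derivative, hence is globally Lipschitz on $L^\infty$'' does not follow.

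The gap is fixable, and you should say how. The relevant phase space is the affine space $\{h:\ h-\mathrm{id}\in L^\infty\}$ (the functions of interest satisfy $h(\theta+1)=h(\theta)+1$, so they are not literally in $L^\infty$); on any set where $\|h-\mathrm{id}\|_{L^\infty}$ is bounded, the pairs $(h(\theta),h(\theta\pm\omega_j))$ stay in a fixed diagonal strip, where (H1) together with the ``uniformly $C^r$'' hypothesis does give uniform bounds on all derivatives of $\partial_iH_j$ up to order $r$. Your Nemytskii argument then shows $X$ is $C^r$ and Lipschitz on bounded sets, giving local existence, uniqueness and $C^r$ dependence; global forward existence follows either from the linear a priori bound $\|X(h)\|_{L^\infty}\le A+B\|h-\mathrm{id}\|_{L^\infty}$ (mean value theorem in the strip plus bounded second derivatives), or more simply from the fact that the flow is only ever used on the forward-invariant order interval $\mathcal{K}=\{h^-\le h\le h^+\}$, on which the Lipschitz constant is uniform. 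Your identification of rest points with almost-everywhere solutions of the Euler--Lagrange equation \eqref{EulerLagrangePercival} is correct as stated.
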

By ODE theory in Banach space (see \cite{Hale'80}), it is easy to
see that the gradient flow $\Phi^t$ is well-defined for $t\geq0$
since the vector field $-X(h^t)$ is globally Lipschitz. From the
gradient flow equation itself, we can get some simple properties.

\begin{Proposition}\label{Proposition:1}
$             [\Phi^t(h_0)]\circ T_a=\Phi^t(h_0\circ T_a);\qquad
              \Phi^t(h_0+m)=\Phi^t(h_0)+m$\\
for any $a\in\mathbb{R},m\in\mathbb{Z}$.
\end{Proposition}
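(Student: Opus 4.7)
The plan is to derive both identities from uniqueness of solutions of the ODE system \eqref{ODEsystem}, after verifying that the vector field $X$ enjoys the corresponding equivariances. Since $-X$ is globally Lipschitz on $L^\infty$ (by the standing assumptions on the $\partial_i H_j$, as used to set up the flow), uniqueness holds in the Banach space $L^\infty$, so it suffices to check that both sides of each claimed equality solve \eqref{ODEsystem} with the same initial datum.

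For the first identity, I would show that $X$ commutes with $T_a$ for every $a\in\mathbb{R}$. This is a direct computation: $(h\circ T_a)\circ T_{\pm\omega_j}=(h\circ T_{\pm\omega_j})\circ T_a$, so each summand $\partial_1 H_j(h\circ T_a,(h\circ T_a)\circ T_{\omega_j})+\partial_2 H_j((h\circ T_a)\circ T_{-\omega_j},h\circ T_a)$ equals the corresponding summand evaluated at $h$ and then composed on the right with $T_a$. Hence $X(h\circ T_a)=X(h)\circ T_a$. Now set $\tilde h^t:=\Phi^t(h_0)\circ T_a$. Differentiating termwise, $\tfrac{d}{dt}\tilde h^t=(-X(\Phi^t(h_0)))\circ T_a=-X(\Phi^t(h_0)\circ T_a)=-X(\tilde h^t)$, and $\tilde h^0=h_0\circ T_a$. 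By uniqueness, $\tilde h^t=\Phi^t(h_0\circ T_a)$.

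For the second identity, I would use the periodicity hypothesis (H1), which passes to the derivatives: $\partial_k H_j(u+1,v+1)=\partial_k H_j(u,v)$ for $k=1,2$. Consequently, for any $m\in\mathbb{Z}$, $X(h+m)=X(h)$, since each summand in $X$ is evaluated at arguments shifted by the same integer $m$ in both slots. Therefore $\hat h^t:=\Phi^t(h_0)+m$ satisfies $\tfrac{d}{dt}\hat h^t=-X(\Phi^t(h_0))=-X(\hat h^t)$ with $\hat h^0=h_0+m$, and uniqueness identifies it with $\Phi^t(h_0+m)$.

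The only real care needed is to ensure that the symmetry operations $h\mapsto h\circ T_a$ and $h\mapsto h+m$ map $L^\infty$ into itself continuously (obvious) and that the Lipschitz constant of $-X$ used to invoke uniqueness does not degenerate under them—which is clear since the symmetries are isometries of $L^\infty$. Thus no serious obstacle arises; the argument is essentially a standard "equivariance $+$ uniqueness" argument for Banach-space ODEs.
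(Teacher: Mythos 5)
Your proof is correct and follows essentially the same route as the paper: both arguments establish the equivariance of the vector field $X$ under $T_a$ and under integer shifts (the latter via (H1)) and then identify the two candidate solutions of \eqref{ODEsystem} by uniqueness for the Banach-space ODE. If anything, your explicit appeal to uniqueness is cleaner than the paper's phrasing of the first identity (which argues via a "constant $C$" that is then evaluated at $t=0$), but the underlying idea is identical.
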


\begin{proof}For the first equality, we differentiate its left hand
side with respect to t, and get:
\begin{align}
\frac{d}{dt}[\Phi^t(h_0)]\circ T_a&=-X(\Phi^t(h_0))\circ T_a\nonumber\\
                                &=-\sum_{j=1}^d[\partial_1 H_j(\Phi^t(h_0),\Phi^t(h_0)\circ T_{\omega_j})+\partial_2H_j(\Phi^t(h_0)\circ T_{-\omega_j},\Phi^t(h_0))]\circ T_a\nonumber\\
                                &=-\sum_{j=1}^d[\partial_1 H_j(\Phi^t(h_0)\circ T_a,\Phi^t(h_0)\circ T_{\omega_j}\circ T_a)+\partial_2H_j(\Phi^t(h_0)\circ T_{-\omega_j}\circ T_a,\Phi^t(h_0)\circ T_a)]\nonumber\\
                                &=-X(\Phi^t(h_0)\circ T_a)=\frac{d}{dt}\Phi^t(h_0\circ T_a)\nonumber
\end{align}
This means that there exists some $C$ independent of $t$ such that
$[\Phi^t(h_0)]\circ T_a=\Phi^t(h_0\circ T_a)+C$. Take $t=0$. We have
$C=0$, i.e. the first equality holds.\\
For the second equality, we just consider the case when $m=1$ and
observe some symmetry of the gradient flow equation. Due to (H1), we
know that $\Phi^t(h_0)$ is also a solution of
\begin{equation*}
\left\{
  \begin{array}{rl}
   &\frac{d}{dt}(h^t+1)=-X(h^t+1)\\
   &h^0+1=h_0+1.
  \end{array}
\right.
\end{equation*}
This means $\Psi^t(h_0+1)\equiv\Phi^t(h_0)+1$ is a solution of
\begin{equation}\label{ode}
\left\{
  \begin{array}{rl}
   &\frac{d}{dt}(h^t)=-X(h^t)\\
   &h^0=h_0+1.
  \end{array}
\right.
\end{equation}
Moreover, by comparing equation (\ref{ode}) with equation
(\ref{ODEsystem}), we have $\Psi^t(h_0+1)=\Phi^t(h_0+1)$. This
finishes the proof.
\end{proof}
One of the key properties of $\Phi^t$ which was first observed by S.
B. Angenent in the case of standard map (\cite{Angenent'88}) is that
it is strictly monotone, i.e.
\begin{Lemma}[Strong Comparison Principle]\label{lemma:SCP}
If $h,\tilde{h}\in Y$ and $h<\tilde{h}$ and $\omega$ is not
completely resonant, we have
$\Phi^t(h)\prec\!\!\prec\Phi^t(\tilde{h})$ for any $t>0$.
\end{Lemma}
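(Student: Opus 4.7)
The plan is to convert the nonlinear flow for $w^t := \Phi^t(\tilde h) - \Phi^t(h)$ into a linear equation with non-negative off-diagonal coefficients, then combine a simple positivity-preservation argument with an irreducibility step that uses the irrationality of some component of $\omega$. First I would linearize along the segment $s \mapsto \Phi^t(h) + s w^t$: writing
\[
X(\tilde h^t)(\theta) - X(h^t)(\theta) = \int_0^1 \frac{d}{ds} X(\Phi^t(h) + s w^t)(\theta)\, ds
\]
and using the explicit form of $X$, one obtains
\[
\dot w^t(\theta) = -a^t(\theta) w^t(\theta) + \sum_{j=1}^d \bigl[b_j^t(\theta) w^t(\theta + \omega_j) + c_j^t(\theta) w^t(\theta - \omega_j)\bigr],
\]
where $b_j^t(\theta) = -\int_0^1 \partial_1\partial_2 H_j\,ds \ge |c|>0$ and $c_j^t(\theta) = -\int_0^1 \partial_1\partial_2 H_j\,ds \ge |c|>0$ by (H2), while $a^t$ is uniformly bounded by some $M<\infty$. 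After the gauge change $v^t := e^{Mt}w^t$ the equation becomes $\dot v^t = \mathcal L^t v^t$ with
\[
(\mathcal L^t v)(\theta) = (M - a^t(\theta))\,v(\theta) + \sum_{j=1}^d b_j^t(\theta)\,v(\theta+\omega_j) + c_j^t(\theta)\,v(\theta - \omega_j),
\]
every coefficient of which is non-negative.

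Next I would read off non-negativity of $w^t$ from the absolutely convergent Duhamel/Volterra expansion
\[
v^t = \sum_{n\ge 0}\int_{0<s_n<\cdots<s_1<t} \mathcal L^{s_1}\mathcal L^{s_2}\cdots \mathcal L^{s_n} v^0\; ds_1\cdots ds_n,
\]
each summand of which is a positive operator applied to the non-negative function $v^0=w^0$. To pass from $w^t\ge 0$ to strict positivity I first need a uniform lower bound for $w^0$ on an open interval. Since $h,\tilde h\in Y$ are monotone and left-continuous, from any $\theta_0$ with $\tilde h(\theta_0)-h(\theta_0)=\varepsilon>0$ one deduces, using monotonicity of $h$ on $(-\infty,\theta_0]$ and left-continuity of $\tilde h$ at $\theta_0$, that $w^0(\theta)\ge \varepsilon/2$ on some half-open interval $I=(\theta_0-\delta,\theta_0]$.

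Finally I would invoke the density hypothesis to propagate this positivity everywhere. Because $\omega$ is not completely resonant, some component $\omega_{j_0}$ is irrational, so the orbit $\{N\omega_{j_0}\bmod 1 : N\in\mathbb Z\}$ is dense in $S^1$; hence for any target $\theta$ there is $N\in\mathbb Z$ with $\theta + N\omega_{j_0}\in I\pmod 1$, and by $1$-periodicity of $w^0$ one has $w^0(\theta + N\omega_{j_0})\ge \varepsilon/2$. Isolating in the Volterra series the single summand of order $|N|$ obtained by selecting at every step the shift $T_{\omega_{j_0}}$ (or $T_{-\omega_{j_0}}$ if $N<0$), one finds a contribution to $v^t(\theta)$ of at least
\[
\int_{0<s_{|N|}<\cdots<s_1<t} \prod_{k=1}^{|N|} b_{j_0}^{s_k}\bigl(\theta+(k-1)\omega_{j_0}\bigr)\, w^0(\theta+N\omega_{j_0})\, ds_1\cdots ds_{|N|} \;\ge\; \frac{(|c|\,t)^{|N|}}{|N|!}\cdot\frac{\varepsilon}{2},
\]
while all remaining summands are non-negative. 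Therefore $v^t(\theta)>0$ and hence $w^t(\theta)=e^{-Mt}v^t(\theta)>0$ for every $\theta$ and every $t>0$, which is precisely $\Phi^t(h)\prec\!\!\prec\Phi^t(\tilde h)$.

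The main obstacle is the promotion of the pointwise assumption $h\not\equiv \tilde h$ to strict positivity of $w^0$ on a \emph{topological} interval: density of the $\omega_{j_0}$-orbit only provides a single point $\theta+N\omega_{j_0}$ at which we may evaluate $w^0$, and if that lone point happened to lie in the zero-set of $w^0$ the argument would collapse. The left-continuity built into the definition of $Y$, combined with monotonicity, is exactly what supplies the required interval. In the weaker space $Y^*_1$ the same argument would need an additional averaging step or a reduction to a $Y$-representative of the same type as in Lemma~\ref{lemma:Last}.
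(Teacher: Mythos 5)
Your proof is correct and follows the same skeleton as the paper's: reduce to a linear nonautonomous equation for the difference, gauge away the diagonal term, observe that (H2) makes the off-diagonal coefficients bounded below by $|c|>0$, and use density of $\{N\omega_{j_0} \bmod 1\}$ to spread positivity from one interval to the whole circle. The one genuine difference is in how positivity is propagated: the paper runs an Euler/Picard time-stepping argument, gaining positivity on $[\alpha+k\omega_m,\beta+k\omega_m]$ at successive small times $t_1<t_2<\cdots$ and then invoking compactness of $[0,T]$, whereas you expand the solution in a Volterra/Duhamel series of positive operators and isolate the single order-$|N|$ summand that follows the chain of shifts by $\omega_{j_0}$ into the interval $I$. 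Your version buys two things: an explicit quantitative lower bound $\frac{(|c|t)^{|N|}}{|N|!}\cdot\frac{\varepsilon}{2}$ valid for every $t>0$ at once (no time-stepping or persistence argument needed), and a clean justification --- via monotonicity of $h$ and left-continuity of $\tilde h$ --- of the claim that $w^0$ is bounded below on an interval of positive length, a point the paper asserts without proof (indeed $v=\tilde h-h$ need not itself lie in $Y$, so the paper's phrase ``$0\le v\in Y$'' is loose). Your closing remark about $Y^*_1$ correctly identifies why the lemma is stated for $Y$.
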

\begin{proof}
We use that the flow in the Banach space is differentiable (see
\cite{Rafael'97} for details). By the general theory of ODE, we also
have that the derivative satisfies the equations of variation
%We prove the theorem for any $t\in(0,T]$ which T is any given positive number. Thus we compute
\begin{align}
DX(h)\cdot \eta =\sum_{j=1}^d[&(\partial_{11}H_j(h,h\circ T_{\omega_j})+\partial_{22}H_j(h\circ T_{-\omega_j},h))\cdot\eta+\nonumber\\
                              &\partial_{12}H_j(h\circ T_{-\omega_j},h)\cdot\eta\circ T_{-\omega_j}+\partial_{12}H_j(h,h\circ T_{\omega_j})\cdot\eta\circ
                              T_{\omega_j}]\nonumber\\
\text{ for any }\eta\in L^\infty\nonumber
\end{align}
Let $M^t(h_0)=D\Phi^t(h_0):L^\infty\rightarrow L^\infty$ is a linear
operator which satisfies the operator equation below (often called
variational equation \cite{Hale'80} even if they do not have much to
do with calculus of variations):
\begin{equation*}
\left\{
  \begin{array}{rl}
   &\frac{d}{dt} M^t=-DX(\Phi^t(h_0))\cdot M^t\\
   &M^0=id
  \end{array}
\right.
\end{equation*}
To prove Lemma \ref{lemma:SCP}, due to the fact that $M^t$ is a
linear operator, it suffices to prove that the solution is strictly
positive on $Y$. That is, $0\leq v=\tilde{h}-h$ implies
$0\prec\!\!\prec v^t\equiv M^t(h_0)\cdot v$. In fact, let
$h_0=h+s\cdot(\tilde{h}-h)$ for $0\leq s\leq 1$. By the fundamental
theorem of calculus, we have $\Phi^t(\tilde{h})-\Phi^t(h)=\int_0^1
D\Phi^t(h+s\cdot(\tilde{h}-h))\cdot(\tilde{h}-h)~ds=\int_0^1D\Phi^t(h_0)\cdot
v~ds\geq0$ for any $\theta\in \mathbb{R}$, i.e.
$\Phi^t(h)\prec\!\!\prec\Phi^(\tilde{h})$.

Let $u^t=-\sum_{j=1}^d[\partial_{11}H_j(\Phi^t(h_0),\Phi^t(h_0)\circ
T_{\omega_j})+\partial_{22}H_j(\Phi^t(h_0)\circ
T_{-\omega_j},\Phi^t(h_0))]$ and $W^t=e^{-\int_0^t u^sds}\cdot v^t$.
We get:
\begin{align}
\frac{d}{dt} W^t&=e^{-\int_0^t u^s ds}\cdot u^t\cdot
v^t+e^{-\int_0^t
u^s ds}\cdot\frac{d}{dt} v^t\nonumber\\
        &=-u^t\cdot W^t-\sum_{j=1}^d(\partial_{11}H_j(\Phi^t(h_0),\Phi^t(h_0)\circ T_{\omega_j})\nonumber\\
        &\qquad+\partial_{22}H_j(\Phi^t(h_0)\circ
        T_{-\omega_j},\Phi^t(h_0)))\cdot v^t \cdot e^{-\int_0^t u^s ds} \nonumber\\
        &\qquad-\sum_{j=1}^d\partial_{12}H_j(\Phi^t(h_0)\circ
        T_{-\omega_j},\Phi^t(h_0))\cdot v^t\circ T_{-\omega_j}\cdot
        e^{-\int_0^t u^s ds}\nonumber\\
        &\qquad-\sum_{j=1}^d\partial_{12}H_j(\Phi^t(h_0),\Phi^t(h_0)\circ
        T_{\omega_j})\cdot v^t\circ T_{\omega_j}\cdot e^{-\int_0^t
        u^sds}\nonumber\\
        &=-\sum_{j=1}^d\partial_{12}H_j(\Phi^t(h_0)\circ
        T_{-\omega_j},\Phi^t(h_0))\cdot W^t\circ T_{-\omega_j}\cdot e^{\int_0^t
        (u^s\circ T_{-\omega_j}-u^s)ds}\nonumber\\
        &\quad-\sum_{j=1}^d\partial_{12}H_j(\Phi^t(h_0),\Phi^t(h_0)\circ
        T_{\omega_j})\cdot W^t\circ T_{\omega_j}\cdot e^{\int_0^t
        (u^s\circ T_{\omega_j}-u^s)ds}\nonumber
\end{align}
By using Euler method, for $t$ small enough,
\[W^t=v+t\cdot \frac{d}{dt} W^t\cdot v+O(t^2).\]
Since $0\leq v\in Y$, there exists a small interval $[\alpha,\beta]$
such that $v|_{[\alpha,\beta]}>0$. Since $\omega$ is not completely
resonant, we can find some component $\omega_m$ which is irrational
for some $m\in\{1,\ldots,d\}$. Due to (H2) and Picard's iteration,
$W^{t_1}|_{[\alpha+\omega_j,\beta+\omega_j]}>0$ for sufficiently
small $t_1$ and $j=1,\ldots,d$. In particular,
$W^{t_1}|_{[\alpha+\omega_m,\beta+\omega_m]}>0$. Repeating $k$
times, we get
$W^{t_2}|_{[\alpha+k\cdot\omega_m,\beta+k\cdot\omega_m]}>0$ for
small $t_2>t_1$. Due to the compactness of interval $[0,T]$ and the
fact $\omega_m$ is irrational, this leads to $0\prec\!\!\prec W^t$
for any $t\in(0,T]$. Therefore $0\prec\!\!\prec v^t$ holds for any
$t>0$. This finishes the proof.
\end{proof}

\begin{Proposition}\label{Proposition:2}
$Y$ is invariant under the gradient flow $\Phi^t$, that is,
$\Phi^t(Y)\subseteq Y$ for $t\geq0$.
\end{Proposition}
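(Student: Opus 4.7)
The plan is to verify the three defining properties of $Y$ for $h^t = \Phi^t(h_0)$ when $h_0 \in Y$: the twisted periodicity $h^t(\theta+1)=h^t(\theta)+1$, monotonicity in $\theta$, and left continuity. The first two fall out of Proposition~\ref{Proposition:1} combined with a weak version of the comparison principle, while the third requires a pointwise approximation argument that uses uniqueness of solutions of the ODE system \eqref{ODEsystem}.

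For the twisted periodicity, the hypothesis $h_0(\theta+1)=h_0(\theta)+1$ reads $h_0\circ T_1=h_0+1$, so applying Proposition~\ref{Proposition:1} twice gives $\Phi^t(h_0)\circ T_1=\Phi^t(h_0\circ T_1)=\Phi^t(h_0+1)=\Phi^t(h_0)+1$. For monotonicity, I would first note that the argument in the proof of Lemma~\ref{lemma:SCP} actually establishes positivity of the variational operator $M^t$ before upgrading to strict positivity, so one has the weak comparison statement: $h\le\tilde h$ in $L^\infty$ implies $\Phi^t(h)\le\Phi^t(\tilde h)$. Since $h_0$ is non-decreasing, $h_0\circ T_a\ge h_0$ for every $a>0$; applying the weak comparison and Proposition~\ref{Proposition:1} yields $\Phi^t(h_0)\circ T_a=\Phi^t(h_0\circ T_a)\ge\Phi^t(h_0)$, which is exactly monotonicity of $h^t$.

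For left continuity, fix $\theta\in\mathbb{R}$ and pick any sequence $a_n\uparrow 0^-$. Left continuity of $h_0$ together with its monotonicity gives $h_0\circ T_{a_n}\nearrow h_0$ pointwise. Weak comparison plus Proposition~\ref{Proposition:1} imply that $\Phi^t(h_0)\circ T_{a_n}=\Phi^t(h_0\circ T_{a_n})$ is non-decreasing in $n$ and bounded above by $\Phi^t(h_0)$, so the pointwise limit
\[
g^t(\theta)\;=\;\lim_{n\to\infty}\Phi^t(h_0)(\theta+a_n)\;=\;[\Phi^t(h_0)](\theta^-)
\]
exists and lies in $L^\infty$. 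I would then pass to the limit in the integral form
\[
\Phi^t(h_0\circ T_{a_n})(\theta)\;=\;h_0(\theta+a_n)-\int_0^t X(\Phi^s(h_0\circ T_{a_n}))(\theta)\,ds
\]
of \eqref{ODEsystem}: the first term tends to $h_0(\theta)$ by left continuity of $h_0$, and the integrand tends to the corresponding expression in $g^s$ at the three points $\theta$, $\theta\pm\omega_j$ by continuity of $\partial_1H_j,\partial_2H_j$, so dominated convergence (boundedness of $\partial_k H_j$ on bounded sets, guaranteed by (H1)--(H2)) yields that $g^t$ satisfies the same integral equation as $\Phi^t(h_0)$ with the same initial datum $h_0$. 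Uniqueness of solutions in $L^\infty$ (Lemma~\ref{lemma:gf}) then forces $g^t=\Phi^t(h_0)$, i.e.\ $[\Phi^t(h_0)](\theta^-)=\Phi^t(h_0)(\theta)$, which is the required left continuity.

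The one delicate step is the last one: the function $g^t$ is defined only as a pointwise (one-sided) limit, so one must check that $(t,\theta)\mapsto g^t(\theta)$ is regular enough to play the role of a solution to the integral equation and fall under uniqueness. This is where the precompactness provided by monotonicity is essential, since it trivially provides an $L^\infty$ bound and measurability of the limit, allowing dominated convergence to be applied cleanly without any uniform estimate on $h_0\circ T_{a_n}-h_0$ in $L^\infty$ (which we do not have, and in fact fails whenever $h_0$ has a jump).
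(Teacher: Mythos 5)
Your proof is correct and follows the same overall skeleton as the paper's: periodicity from Proposition~\ref{Proposition:1}, monotonicity from a comparison principle applied to $h_0$ versus $h_0\circ T_a$, and left continuity from stability of the flow under a limit of translated initial data. Two of your choices are worth highlighting because they improve on the paper's very terse argument. First, the paper deduces monotonicity by citing Lemma~\ref{lemma:SCP} directly, which as stated carries the hypothesis that $\omega$ is not completely resonant, whereas Proposition~\ref{Proposition:2} has no such hypothesis; your observation that only the \emph{weak} comparison principle (positivity of $M^t$, which needs no resonance assumption) is required removes that mismatch. Second, for left continuity the paper simply invokes ``continuity of the gradient flow with respect to the initial data,'' which taken literally in the $L^\infty$ topology does not apply, since $h_0\circ T_{a_n}\not\to h_0$ in $L^\infty$ when $h_0$ has jumps --- exactly the point you flag. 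Your replacement argument (monotone pointwise convergence of $\Phi^t(h_0)\circ T_{a_n}$, passage to the limit in the integral form of \eqref{ODEsystem} by dominated convergence, then uniqueness from Lemma~\ref{lemma:gf}) closes this gap; the only thing to make explicit is that the limit curve $t\mapsto g^t$ is Lipschitz into $L^\infty$ (immediate from the uniform bound on $X$ in the integral identity), so that it genuinely falls under the uniqueness statement. With that sentence added, your proof is complete and strictly more careful than the one in the paper.
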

\begin{proof}
For any $h\in Y$ and $t>0$, since $h<h\circ T_a$ if $a>0$, the fact
that $\Phi^t(h)\prec~\!\!\!\prec \Phi^t(h\circ T_a)$ is just an
immediate consequence of Lemma \ref{lemma:SCP}. We already know that
$\Phi^t(h)\circ T_1=\Phi^t(h)+1$ by Proposition \ref{Proposition:1}.
The left continuity of $\Phi^t(h)$ is from continuity of the
gradient flow with respect the initial data and the definition of
$h$.
\end{proof}

\begin{Lemma}\label{lemma:CP}
If $h^-<h^+$ are both critical points of $\mathscr{P}_\omega$ on
$Y$, then $h^-\prec\!\!\prec h^+$.
\end{Lemma}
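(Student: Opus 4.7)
The plan is to deduce the lemma directly from the Strong Comparison Principle (Lemma \ref{lemma:SCP}) combined with the fact that critical points of $\mathscr{P}_\omega$ are exactly the rest points of the gradient flow $\Phi^t$ (Lemma \ref{lemma:gf}). This is a very short argument: the hard work has already been done in establishing monotonicity of the flow.

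More concretely, first I would observe that, because $h^\pm$ are critical points of $\mathscr{P}_\omega$, we have $X(h^\pm)=0$, so both are fixed points of the flow: $\Phi^t(h^-)=h^-$ and $\Phi^t(h^+)=h^+$ for every $t\geq 0$. Since we are given $h^-<h^+$ in $Y$, and the standing hypothesis of Theorem~\ref{Theorem:Main2} is that $\omega$ is not completely resonant, the Strong Comparison Principle applies and yields, for any fixed $t>0$,
\[
\Phi^t(h^-)\prec\!\!\prec \Phi^t(h^+).
\]
Substituting the fixed-point identities on both sides gives $h^-\prec\!\!\prec h^+$, which is the conclusion.

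The only point that needs a small check is that the hypothesis on $\omega$ used in Lemma~\ref{lemma:SCP} is indeed available in this lemma. In the statement of Lemma~\ref{lemma:CP} this is implicit from the context of Theorem~\ref{Theorem:Main2}, where $\omega$ is assumed not completely resonant; if one prefers a self-contained statement, it should be added as an explicit hypothesis, since otherwise, in the completely resonant case, equality of $h^-$ and $h^+$ on intervals can occur (e.g.\ the rational case where orbits decouple into independent finite systems). With that caveat, there is no real obstacle: the lemma is essentially a corollary of Lemma~\ref{lemma:SCP} together with the trivial observation that critical points are flow-invariant. The only tempting alternative would be a direct pointwise analysis of $X(h^-)-X(h^+)$ using (H2), but this is strictly more work and is exactly what the gradient flow machinery has already packaged for us.
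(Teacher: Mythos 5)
Your proof is correct and is essentially identical to the paper's own argument: both deduce $\Phi^t(h^-)\prec\!\!\prec\Phi^t(h^+)$ from Lemma~\ref{lemma:SCP} and then use that critical points are rest points of the flow (Lemma~\ref{lemma:gf}). Your remark that the non-complete-resonance hypothesis on $\omega$ is inherited from the context of Theorem~\ref{Theorem:Main2} rather than stated explicitly in the lemma is a fair observation, but does not change the argument.
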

\begin{proof}
Due to $h^-<h^+\in Y$ and Lemma \ref{lemma:SCP}, we have
$\Phi^t(h^-)\prec\!\!\prec\Phi^t(h^+)$. On the other hand, since
$h^-$ and $h^+$ are both critical points of $\mathscr{P}_\omega$,
$\Phi^t(h^-)=h^-$ and $\Phi^t(h^+)=h^+$ hold by Lemma
\ref{lemma:gf}. This finishes the proof.
\end{proof}

\begin{proof}[Proof of Theorem \ref{Theorem:Main2}]
(1) is an immediate consequence of Lemma \ref{lemma:CP}.

In order to prove (2), we follow the method used by
\cite{Rafael'07a}. We define the compact set $\mathcal
{K}\equiv\{h\in Y:h^-\leq h\leq h^+\}$.
%Without loss of generality, we assume $graph(h^-),graph(h^+)\subseteq[0,1]\times[0,1]$.
Due to the compactness of $\mathcal{K}$, the topology induced by
$L^\infty$ norm and the topology induced by the Hausdorff metric are
equivalent on $\mathcal{K}$. For any $h\in\mathcal{K}$, we know that
$h^-=\Phi^t(h^-)\leq\Phi^t(h)\leq\Phi^t(h^+)=h^+$ by Lemma
\ref{lemma:SCP} and the definition of $h^-$ and $h^+$. This means
that $\Phi^t(\mathcal{K})\subseteq\mathcal{K}$ due to Proposition
\ref{Proposition:2}. Let $h^s=s\cdot h^++(1-s)\cdot h^-$ for any
$s\in[0,1]$. We have
\begin{equation}\label{percivallagrangeunderflow}
\frac{d}{dt}\mathscr{P}_\omega(\Phi^t(h^s))=-\int_0^1
|X(\Phi^t(h^s))|^2d\theta\leq0,
\end{equation}
i.e. $\mathscr{P}_\omega(\Phi^t(h^s))$ is decreasing with respect to
$t$ for any fixed $s\in[0,1]$. Since
$\mathscr{P}_\omega|_\mathcal{K}$ is bounded and
$\frac{d^2}{dt^2}\mathscr{P}_\omega(\Phi^t(h^s))$ is bounded from
above,
$\lim_{t\rightarrow\infty}\mathscr{P}_\omega(\Phi^t(h^s))$ exists and $\lim_{t\rightarrow\infty}\frac{d}{dt}\mathscr{P}_\omega(\Phi^t(h^s))=0$.\\
Let
\[
\mathscr{B}_\omega=\max_{s\in[0,1]}\inf_{t\geq0}\mathscr{P}_\omega(\Phi^t(h^s))\equiv\max_{s\in[0,1]}\lim_{t\rightarrow\infty}\mathscr{P}_\omega(\Phi^t(h^s))\geq
\mathscr{P}_\omega(h^-).
\]

There are two possibilities
$\mathscr{B}_\omega>\mathscr{P}_\omega(h^-)$ or
$\mathscr{B}_\omega=\mathscr{P}_\omega(h^-)$. We will show that the
conclusion holds in each of the two cases.

\begin{itemize}
\item If $\mathscr{B}_\omega>\mathscr{P}_\omega(h^-)$, there exists
$s_0\in(0,1)$ such that
\[
\lim_{t\rightarrow\infty}\mathscr{P}(\Phi^t(h^{s_0}))=\mathscr{B}_\omega
\]
and
\[
\lim_{t\rightarrow\infty}\frac{d}{dt}\mathscr{P}(\Phi^t(h^{s_0}))=0.
\]
Due to the compactness of $\mathcal{K}$, we can extract a
subsequence $t_n\rightarrow\infty$ such that
$\Phi^{t_n}(h^{s_0})\rightarrow h^*\in\mathcal{K}$. This leads to
$\mathscr{P}_\omega(h^*)=\mathscr{B}_\omega$ which means that $h^*$
is different from $h^-$ and $h^+$. In the other hand, due to
(\ref{percivallagrangeunderflow}), we have
$\lim_{t_n\rightarrow\infty}\int_0^1|X(\Phi^{t_n}(h^{s_0}))|^2d\theta=\int_0^1|X(h^*)|^2d\theta=0$.
Since $h^*$ is left-continuous, we get $X(h^*)=0$ which means $h^*$
is a critical point of $\mathscr{P}_\omega$. This finishes the proof
when $\mathscr{B}_\omega>\mathscr{P}_\omega(h^-)$.

\item If $\mathscr{B}_\omega=\mathscr{P}_\omega(h^-)$, we have
$\inf_{t\geq0}\mathscr{P}(\Phi^t(h^s))\leq
\mathscr{B}_\omega=\mathscr{P}_\omega(h^-)$. This means
$\inf_{t\geq0}\mathscr{P}(\Phi^t(h^s))=\mathscr{P}_\omega(h^-)$ for
any $s\in[0,1]$.  We now argue by contradiction and assume that no
other critical point (and so a fortiori no minimizer) but $h^-$ and
$h^+$ in $\mathcal{K}$. We have two alternatives, both of which lead
to contradictions with the non-existence of other critical points.

\begin{enumerate}
\item[(a)] One is that the omega limit set of $\Phi^t(h^s)$ contains
both $\{h^-,h^+\}$. Let
\begin{align}
&B_r(h^-)\equiv \{ h\in\mathcal{K}:d(h,h^-)<r \},\nonumber\\
&B_r(h^+)\equiv \{ h\in\mathcal{K}:d(h,h^+)<r \}\nonumber
\end{align} denote
the $r$-ball of $h^-$ and $h^+$ respectively in $\mathcal{K}$ . Take
$0<r<\frac{1}{2}d(h^-,h^+)$ sufficiently small such that
$B_r(h^-)\cap B_r(h^+)=\phi$. Thus there exists an $M_0(r)$ in this
case, such that $\Phi^t(h^s)\in B_r(h^-)\cup B_r(h^+)$ for any
$t>M_0(r)$. Let $D^-\equiv \{t>M_0(r):\Phi^t(h^s)\in B_r(h^-)\}$ and
$D^+\equiv \{t>M_0(r):\Phi^t(h^s)\in B_r(h^+)\}$ which are nonempty.
We know $D^-\cap D^+=\phi$. By the continuity of $\Phi^t(h^s)$ with
respect to $t$ these two sets are open. This means that two nonempty
disjoint open sets $D^-$ and $D^+$ cover a connected open interval
$(M_0(r),\infty)$ which is a contradiction.
\item[(b)] The other is that the omega limit set of $\Phi^t(h^s)$ has
only one point either $\{h^-\}$ or $\{h^+\}$. Let $E^-\equiv \{s\in
(0,1):\lim_{t\rightarrow \infty}\Phi^t(h^s)=h^-\}$ and $E^+\equiv
\{s\in (0,1):\lim_{t\rightarrow \infty}\Phi^t(h^s)=h^+\}$ which are
nonempty open sets due to the continuous dependence of $\Phi^t$ on
initial data. This is a contradiction by the same trick used in (a).
\end{enumerate}
\end{itemize}
This completes the proof of (2).
\end{proof}

%\begin{Remark}
%In particular, we have either there is a continuous parameter of
%ground states or there is a ground state and another critical point
%which is not a ground state. This is a higher dimensional analogue
%of the criterion of the non-existence of invariant circles if and
%only if there is a positive Peierls-Nabarro barrier.
%\end{Remark}

\section*{Appendix A. Hull function approach to general lattices}
The method of hull functions can be extended to more general
lattices.

For simplicity, we discuss only when the place of $\mathbb{Z}^d$ is
taken by a finitely generated group $G$ and the interaction is
invariant under the action of $G$, as well as by addition of $1$ to
the configurations. See \cite{Rafael'10} for more general lattices.

Because of the translation invariance, we consider variational
principles
\begin{equation}\label{formalvp}
\mathscr{L}(u)=\sum_{\substack{B\subseteq G\\
\sharp B~~ \text{finite}\\0\in B}} S_{B\cdot g}(u)
\end{equation}where $S_B$ depends only on $u|_B$.

We recall that $\omega:~G\rightarrow \mathbb{R}$ is a cocycle when
$\omega(g\cdot \tilde{g})=\omega(g)+\omega(\tilde{g})$.

Given a cocycle $\omega$ we seek configurations:
\begin{equation}\label{generalhull}
x_g=h_\omega(\omega\cdot g)
\end{equation} for $h_\omega:~\mathbb{R}\rightarrow \mathbb{R}$
monotone, $h_\omega(t+1)=h_\omega(t)+1$.

It is immediate that all configurations \eqref{generalhull} satisfy
for all $k,~g\in G,~l\in\mathbb{Z}$
\begin{equation}\label{omegaBirkhoff}
x_{g\cdot k}+l\leq x_g\Longleftrightarrow\omega(k)+l\leq 0
\end{equation}which is an analogue of the $\omega$-Birkhoff
property. Similarly, one can easily see that the $\omega$-Birkhoff
property \eqref{omegaBirkhoff} implies the existence of a hull
function.

Given $B=\{s_0=0,s_1,\ldots,s_n\}\subseteq G$ we can write
$S_B(u)=S_B(u_0,u_{s_1},\ldots,u_{s_n})$. Given a variational
principle \eqref{formalvp} we can associate the Percival variational
principle
\begin{equation}\label{percival}
\mathscr{P}_\omega(h)=\int_0^1d\theta\sum_{\substack{B\subseteq
G\\\sharp B~\text{finite}\\0\in
B}}S_B(h(\theta),h(\theta+\omega(s_1)),\ldots,h(\theta+\omega(s_n))).
\end{equation}

We use the same procedure as the commutative group case
($\mathbb{Z}^d$) to prove the existence of the minimal
configurations generated by hull functions approach. Namely:
\begin{itemize}
\item The minimizers (resp. critical points) of \eqref{formalvp}
give via \eqref{generalhull} class-$A$ (resp. critical)
configurations.

\item For every cocycle $\omega$, there exists a class-$A$
minimizer.

\item For every $\omega$ there are at least two different critical
points. If there are two minimizers, then one gets a circle of
critical points.
\end{itemize}
 It is easy to
get the following theorem.
\begin{Theorem}
Under the assumptions as in \cite{Rafael'98} for general lattices,
there is a minimizer $h_\omega$ of $\mathscr{P}_\omega$ over $Y$ or
$Y^*$. $x_g=h_\omega(\omega\cdot g)$ is a $\omega$-Birkhoff ground
state of cocycle $\omega$. In addition, if both $h^- < h^+$ are
minimizers of $\mathscr{P}_\omega$ on $Y$ then $h^-\prec\!\!\prec
h^+$ and there is a critical point in between.
\end{Theorem}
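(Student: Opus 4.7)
The plan is to transcribe the four main theorems of the paper to the generalized setting by observing that all the proofs in Sections 3--5 operate intrinsically on the hull function $h\colon \mathbb{R}\to\mathbb{R}$ and its translates $h\circ T_{\omega(s)}$. The group structure enters only through the indexing of terms in \eqref{percival} and through the additive cocycle values $\omega(s)\in\mathbb{R}$; non-commutativity of $G$ itself never plays a role in the hull formalism because every site contributes only through a real-valued argument $\theta+\omega(s)$. The standing assumptions from \cite{Rafael'98} provide the decay in the diameter of $B$ needed to make the sum in \eqref{percival} absolutely convergent and to make the derivative estimates uniform.

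First I would establish existence of a minimizer. The continuity estimate \eqref{estimate} in the proof of Theorem~\ref{Theorem:Basic} generalizes by summing over $B$: one bounds $|\mathscr{P}_\omega(h)-\mathscr{P}_\omega(\tilde h)|$ by
\[
\sum_{B\ni 0} \sum_{s\in B} \int_0^1 |\partial_s S_B|_\infty \, |h(\theta+\omega(s))-\tilde h(\theta+\omega(s))|\, d\theta,
\]
and the graph-distance argument controlling each $L^1$-difference carries over unchanged. Compactness of $\{0\le h(\theta)\le 2\}$ in the Hausdorff topology and the symmetries inherited from $S_B(u+1)=S_B(u)$ yield the minimizer on $Y$. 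Alternatively, Lemma~\ref{lemma:Basic} applies on $Y^*$: the Fundamental Inequality (Lemma~\ref{lemma:Fundamental Inequality}) extends termwise because the higher-dimensional analogue of (H2) asserts $\partial_i\partial_j S_B\le 0$ for $i\neq j$, and lower semicontinuity for pointwise convergence follows from Fatou. Lemma~\ref{lemma:Last} transfers verbatim since its proof uses only local boundedness, the equivariance $h(\theta+1)=h(\theta)+1$, sub-modularity, and translation invariance.

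For the Euler--Lagrange equations and the ground-state property I would follow Theorems~\ref{Th:Extended} and \ref{aregroundstates}. In $Y^*$ any compactly supported $\eta\in L^\infty$ is an admissible variation, so stationarity gives $X(h)=0$ where $X(h)$ is now the finite sum of translated partials of the $S_B$'s containing the test site; each individual $X(h)(\theta)$ involves only finitely many $B$ by the decay hypothesis. For the ground-state claim, if $x_g=h_\omega(\omega(g))$ failed to be class-A, one picks a finite deformation $\tilde x$ on some $K\Subset G$ with $\mathscr{L}_K(\tilde x)<\mathscr{L}_K(x)$, defines $\tilde h\in Y^*$ by overwriting $h_\omega$ on short arcs $[t+\omega(g)-\delta,\, t+\omega(g)]$ for $g\in K$, and runs the same $A(\theta)+B(\theta)+C(\theta)$ decomposition. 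Non-resonance (in the cocycle sense that $\omega|_K$ is injective modulo $1$) ensures the arcs are disjoint for $\delta$ small, and $B(\theta),C(\theta)\to 0$ while $A(\theta)\to \mathscr{L}_K(x)-\mathscr{L}_K(\tilde x)>0$ contradicts minimality.

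For the intermediate critical point, I would reproduce the gradient-flow argument of Theorem~\ref{Theorem:Main2}. The flow on $L^\infty$, its equivariance under $T_a$ and integer shifts (Proposition~\ref{Proposition:1}), and invariance of $Y$ (Proposition~\ref{Proposition:2}) all go through unchanged. The strong comparison principle (Lemma~\ref{lemma:SCP}) extends because the variational equation still has the form $\dot W^t = -\sum_{B,s}\partial_i\partial_j S_B \cdot W^t\circ T_{\omega(s)-\omega(s')}\cdot(\text{positive factor})$, and Picard iteration propagates positivity across any interval $[\alpha,\beta]$ by translating through some $\omega(s)$ with irrational value; since $\omega$ is not identically rationally related, such an $s$ exists and its iterates fill a dense subset of $\mathbb{R}/\mathbb{Z}$. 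The remainder of Theorem~\ref{Theorem:Main2}'s proof (compactness of $\mathcal{K}=\{h^-\le h\le h^+\}$, the dichotomy $\mathscr{B}_\omega>\mathscr{P}_\omega(h^-)$ versus the connectedness contradiction) is purely abstract and applies verbatim. The main obstacle is the convergence and uniformity issue in the very first step: unlike $\mathbb{Z}^d$ with finite-range interactions, a general group may have arbitrarily many $B$ at bounded ``distance,'' so the decay assumptions on $S_B$ from \cite{Rafael'98} must be used carefully to justify term-by-term passage to limits and the exchange of sum and derivative in defining $X(h)$; once this is handled, the $\mathbb{Z}^d$ proofs transcribe mechanically.
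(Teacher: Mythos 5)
Your proposal is correct and follows essentially the same route as the paper: the paper's own argument is only a brief sketch that checks the two symmetries of $\mathscr{P}_\omega$, assumes uniform convergence of the sum over $B$, and invokes the weak twist condition to get sub-modularity, asserting that the $\mathbb{Z}^d$ proofs then transfer. Your write-up is in fact more detailed than the paper's (notably in isolating the convergence/uniformity of the sums over $B$ as the one genuinely new issue, which the paper also flags but does not resolve), and no step of your transcription conflicts with what the paper does.
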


\begin{proof}
We give a sketch of proof. We assume that the sum
\[
\sum_{\substack{B\subseteq G\\\sharp B~\text{finite}\\0\in
B}}S_B(h(\theta),h(\theta+\omega(s_1)),\ldots,h(\theta+\omega(s_n))).
\]converges uniformly. We first check the symmetries and obtain
\[
\mathscr{P}_\omega(h\circ T_a)=\sum_{\substack{B\subseteq G\\\sharp
B~\text{finite}\\0\in
B}}\int_0^1S_B(h(\theta+a),h(\theta+a+\omega(s_1)),\ldots,h(\theta+a+\omega(s_n)))=\mathscr{P}_\omega(h),
\]and
\[
\mathscr{P}_\omega(h+1)=\int_0^1d\theta\sum_{\substack{B\subseteq
G\\\sharp B~\text{finite}\\0\in
B}}S_B(h(\theta)+1,h(\theta+\omega(s_1))+1,\ldots,h(\theta+\omega(s_n))+1)=\mathscr{P}_\omega(h).
\]
In addition, we assume that $S_B$ satisfies the weak twist condition
(see \cite{Rafael'98})
\[
\sum_{B\ni q}\frac{\partial^2}{\partial p\partial q}S_B(u)\leq 0
\]for any $p\neq q$. The twist condition implies the rearrangement
inequality:
\[
\mathscr{P}(h\wedge\tilde{h})+\mathscr{P}(h\vee\tilde{h})\leq
\mathscr{P}(h)+\mathscr{P}(\tilde{h}).
\]

\end{proof}

\begin{Remark}
The heuristic argument for \eqref{percival} is that, even if the
group $G$ is not amenable since we consider only configurations
which depend only on the value of the cocycle and which transform
well, we only need to average over the values of the cocycle.

One can make assumptions that argue that the sum \eqref{formalvp}
converges. For example $S_B=0$ when $dian_B\geq R$ (finite range).
We have not explored what are the optimal assumptions.
\end{Remark}

%Note that the construction above shows that if $G$ has $d$
%generators $g_1,\ldots,g_d$.

\section*{Acknowledgements}
The work of R. L. has been supported by NSF grant DMS 0901389. The
visit of X. S. to U.T. Austin has been sponsored by CSC grant
2003619040. Both authors thank CRM (Barcelona) for support during
Fall 2008. X. S. thank U. of Texas Austin for hospitality. We also
thank Prof. C.Q. Cheng for encouragement and support. X. S thank
dynamical system group at Nanjing University such as Prof. W. Cheng,
X. Cui etc.
\bibliographystyle{alpha}
\bibliography{reference}
\end{document}